\DeclareMathAlphabet{\can}{OT1}{cmss}{m}{n}
\newtheorem{thm}{Theorem}[section]
\newtheorem{lem}[thm]{Lemma}
\newtheorem{exa}[thm]{Example}
\theoremstyle{definition}
\theoremstyle{fact}
\theoremstyle{conjecture}
\numberwithin{equation}{section}
\newcommand{\ord}{\operatorname{ord}}
\begin{document}
\title[Further factorization of $x^{n}-1$ over a finite field] {Further factorization of $x^n-1$ over a finite field}
\author[Y. Wu]{Yansheng Wu}
\address{\rm Department of Mathematics, Nanjing University of Aeronautics and Astronautics,
Nanjing, 210016, P. R. China; State Key Laboratory of Cryptology, P. O. Box 5159, Beijing, 100878, PR China}
 \email{wysasd@163.com}

\author[Q. Yue]{Qin Yue}
\address{\rm Department of Mathematics, Nanjing University of Aeronautics and Astronautics,
Nanjing, 210016, P. R. China; State Key Laboratory of Cryptology, P. O. Box 5159, Beijing, 100878, PR China}
\email{yueqin@nuaa.edu.cn}
\author[S. Fan]{Shuqin Fan}
 \address{\rm State Key Laboratory of Cryptology, P. O. Box 5159, Beijing, 100878, PR China}
\email{fansq@sklc.org}

\thanks{The paper is supported by National Natural Science Foundation of China (No. 61772015, 11601475, 11661014), the Guangxi  Science Research and Technology Development Project
(1599005-2-13), and
Foundation of Science and Technology on Information Assurance Laboratory (No. KJ-15-009).}

\subjclass[2010]{  11T06, 12E05}

\keywords{ Irreducible factor, cyclotomic polynomials}

\begin{abstract} Let $\Bbb F_q$ be a finite field with $q$ elements and $n$ a positive integer.  Mart\'{\i}nez,  Vergara and Oliveira \cite{MVO} explicitly factorized  $x^{n} - 1$ over $\Bbb F_q$ under the condition of $rad(n)|(q-1)$. In this paper, suppose that $rad(n)\nmid (q-1)$ and $rad(n)|(q^w-1)$,
 where $w$ is a prime,  we explicitly factorize  $x^{n}-1$ into irreducible factors in $\Bbb F_q[x]$ and count the number of its irreducible factors.
\end{abstract}
\maketitle

\section{Introduction}

Let $\Bbb F_q$ be a finite field of order $q$, where $q$ is a positive power of a prime $p$.
Factorization of  polynomials over finite fields is a classical topic of mathematics. There are various computational problems depending in one way or another on the factorization of polynomials over finite fields. The factorization of $x^{n} - 1$ has a very close relation to the factorization of the $n$-th cyclotomic polynomial $\Phi_n(x)$ (see \cite{LN}), which has received a good deal of attention.

In 2007, the explicit factorization of  $\Phi_{2^{n}r}(x )$  over a finite field $\Bbb F_{q}$ was studied by  Fitzgerald and  Yucas \cite{F},  where $r$ is an odd prime with $q \equiv \pm1\pmod r$.
This gave the explicit irreducible factors of $\Phi_{2^{n}3}(x )$ and the Dickson polynomial $D_{2^{n}3}(x )$ completely.

In 2012,  Wang and Wang \cite{WW} showed that  all irreducible factors of $2^{n}r$-th cyclotomic polynomials can be obtained easily from irreducible factors of cyclotomic polynomials of small orders. Hence, the explicit factorization of $2^{n}5$-th cyclotomic polynomials over finite fields was obtained by this way.
Assuming that the explicit factors of  $\Phi_{r}(x )$ are known,  Tuxanidy and  Wang obtained the irreducible factors of  $\Phi_{2^{n}r}(x )$ over $\Bbb F_{q}$ concretely in \cite{TW}, where $r \geqslant3$ is an arbitrary odd integer.

In 2013,  Chen,  Li and  Tuerhong \cite{CLT} obtained the irreducible factorization of $x^{2^mp^n}-1$ over $\Bbb F_q$  in a very explicit form, where $p$ is an odd prime divisor of $q-1$. It is also shown that all the irreducible factors of $x^{2^mp^n}-1$ over $\Bbb F_q$ are either binomials or trinomials.

Later in 2015,  Mart\'{\i}nez,  Vergara and Oliveira \cite{MVO} investigated when the polynomial $x^{n} -1$ can be split into irreducible binomials $x^{t} -a$ or trinomials $x^{2t} - ax^{t} + b$ over $\Bbb F_q$. Most recently,   Wu, Zhu,  Feng and Yang \cite{WZF}  showed that all irreducible factors of $p^{n}$-th and $p^{n}r$-th cyclotomic polynomials can be obtained from irreducible factors of cyclotomic polynomials of small orders. Applying this result, they presented a general idea to factorize cyclotomic polynomials over finite fields.
Some explicit factorizations of certain cyclotomic polynomials or Dickson polynomials can be found in \cite{BGM,CL,CLT,F,LY1,LY2,LN,W,WY},  etc.

This paper is organized as follows. In Section 2, we present some basic results. In Section 3, suppose that $rad(n)\nmid (q-1)$ and $rad(n)|(q^w-1)$, where $w$ is prime,  we shall explicitly give the  factorization of the polynomial $x^{n}-1$ into irreducible factors over $\Bbb F_{q}$. Moreover, the total number of irreducible factors of $x^{n}-1$ is given. In Section 4, we
 conclude this paper.



For convenience, we introduce the following notations in this paper:

\begin{tabular}{ll}
$q$ & a power of a prime $p$,   \\
$\Bbb F_{q}$ & finite field $GF(q)$,\\
$Tr_{q/p}$ & trace function from $\Bbb F_{q}$ to $\Bbb F_{p}$,\\
$\pi$ & a generator of $\Bbb F_{q^{2w}}^\ast$, \\
$\delta$ & a generator of $\Bbb F_{q^{w}}^\ast$ satisfying $\delta=\pi^{q^{w}+1}$,\\
$\alpha$ & a generator of $ \Bbb F_{q^{2}}^{\ast}$ satisfying $\alpha= \pi^{\frac{q^{2w}-1}{q^{2}-1}}$,\\
$\theta$ & a generator of $\Bbb F_q^\ast$ satisfying $\theta= \pi^{\frac{q^{2w}-1}{q^{}-1}}$,\\


$m_{s}$ & $m_{s}:=\frac{n}{\gcd(n,q^{s}-1)}$, where $s$ is a positive integer,\\
$l_{s}$ & $l_{s}:=\frac{q^{s}-1}{\gcd(n,q^{s}-1)}$, where $s$ is a positive integer,\\

\end{tabular}

 \section{Preliminaries}


Let  $n = p_{1}^{\alpha_{1}}p_{2}^{\alpha_{2}} \cdots p_{l}^{\alpha_{l}}$ be  the prime factorization, where $p_1,\ldots, p_l$ are distinct primes and positive integers $\alpha_i\ge 1$ for $1\le i\le l$. We denote $rad(n) = p_{1} p_{2} \cdots p_{l}$ and  $v_{p_i}(n)=\alpha_i$, $1\le i\le l$.
 For any two integers $m$ and $n$ with $\gcd(m,n)=1$, $\ord_{n} (m)$  denote  the minimum positive integer $k$ such that $m^k \equiv 1 \pmod n$.

There is a classical remarkable criterion on irreducible binomials over $\Bbb F_q$, which was given by Serret in 1866.
\begin{lem} \cite[Theorem 3.75]{LN}
Assume that $t\geq 2$ is a positive integer and $\eta \in \Bbb F_q^\ast$. Then the binomial
 $x^t-\eta$ is irreducible over $\Bbb F_q$ if and only if both the following conditions are satisfied:

  (i) $ rad(t)$ divides $o(\eta)$, where $o(\eta)$ denotes the order of $\eta$ in  $\Bbb F_{q}^{\ast}$,

  (ii) $\gcd(t, \frac{q-1}{o(\eta)})=1$,

(iii) if $4\mid t$, then $4 | (q-1)$.
\end{lem}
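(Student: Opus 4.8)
The plan is to prove Serret's criterion by the classical route: first translate irreducibility of the binomial into a statement about the multiplicative order of a root and the order of the Frobenius, then analyze that statement prime-by-prime, with the prime $2$ requiring special care. Throughout I write $e := o(\eta)$ and recall that $e \mid q-1$ since $\Fq^\ast$ is cyclic.

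First I would reduce to the separable case. If $p \mid t$ (where $q$ is a power of $p$), then $\gcd(p,q-1)=1$ forces $p \nmid e$, so $rad(t) \nmid e$ and condition (i) fails; on the other hand surjectivity of Frobenius gives $\eta = \beta^{p}$, whence $x^{t}-\eta$ is a $p$-th power and is reducible. So both sides fail and we may assume $\gcd(t,q)=1$, making $x^{t}-\eta$ separable. Fixing a root $\zeta$ of $x^{t}-\eta$ in $\ol{\Fq}$ with $d := \ord(\zeta)$, I would use the standard facts that $[\Fq(\zeta):\Fq]=\ord_{d}(q)$ and that $x^{t}-\eta$ is irreducible if and only if $\ord_{d}(q)=t$: indeed $\deg(x^{t}-\eta)=t$, so the minimal polynomial of $\zeta$, of degree $\ord_{d}(q)$, divides $x^{t}-\eta$ and equals it exactly when the degrees coincide. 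This converts the problem into the arithmetic identity $\ord_{\ord(\zeta)}(q)=t$.

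The base case is $t=\ell$ prime with $\ell\neq p$, where for a binomial reducibility is equivalent to having a root, i.e. to $\eta\in(\Fq^\ast)^{\ell}$. Since the $\ell$-th powers form the subgroup $\{x: x^{(q-1)/\gcd(\ell,q-1)}=1\}$, the element $\eta$ fails to be an $\ell$-th power precisely when $\ell\mid q-1$ and $v_{\ell}(e)=v_{\ell}(q-1)$; hence $x^{\ell}-\eta$ is irreducible if and only if $v_{\ell}(e)=v_{\ell}(q-1)\ge 1$, which is exactly ``$\ell\mid e$ and $\ell\nmid (q-1)/e$,'' that is, conditions (i) and (ii) read off at $\ell$. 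I would then climb to prime powers and finally compose coprime factors. For an odd prime $\ell$ I would show by induction, using the tower $\Fq\subset\Fq(\zeta)$ with $\zeta^{\ell}$ a root of the smaller binomial, that $x^{\ell^{c}}-\eta$ is irreducible if and only if $x^{\ell}-\eta$ is: the condition $v_{\ell}(e)=v_{\ell}(q-1)$ propagates and no new constraint appears, because $\ord_{\ell^{k}}(q)$ is multiplied by $\ell$ at each step once $\ell\mid\ord(\zeta)$. For coprime $t=t_{1}t_{2}$ I would use that $\ord(\zeta)$ and the Frobenius order factor multiplicatively over the coprime parts, so that $x^{t_{1}t_{2}}-\eta$ is irreducible if and only if each $x^{t_{i}}-\eta$ is; assembling the per-prime conditions then yields precisely $rad(t)\mid e$ together with $\gcd(t,(q-1)/e)=1$.

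The hard part will be the $2$-primary factor, which is the source of condition (iii). The clean doubling $\ord_{2^{k}}(q)=2\,\ord_{2^{k-1}}(q)$ fails exactly when $q\equiv 3\pmod 4$: there $q^{2}\equiv 1\pmod 8$, so $\Fq(\zeta_{8})=\Fq(\zeta_{4})=\Fq(i)$ has degree $2$ and the order of $q$ modulo $2^{k}$ stalls at $2$ for all $k\ge 1$. Concretely this is the Sophie Germain phenomenon $x^{4}+4b^{4}=(x^{2}-2bx+2b^{2})(x^{2}+2bx+2b^{2})$, which exhibits reducible $x^{4}-\eta$ even for non-square $\eta$ when $4\nmid q-1$. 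Thus for $4\mid t$ one must additionally impose $v_{2}(q-1)\ge 2$, i.e. $4\mid(q-1)$; when this holds the doubling is restored and the odd-prime argument carries over to $2$, and when $4\mid t$ but $4\nmid q-1$ the binomial is reducible despite (i) and (ii). Verifying this dichotomy carefully, by tracking $\ord_{2^{k}}(q)$ and the order of $\zeta$ in the two residue classes $q\equiv 1,3\pmod 4$, is the only genuinely delicate step; the remaining bookkeeping is routine valuation arithmetic.
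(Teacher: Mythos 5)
The paper contains no proof of this lemma: it is Serret's criterion, quoted directly from \cite[Theorem 3.75]{LN}, and the only argument the authors supply is the remark immediately following it, which exhibits the explicit factorization $x^{4s}-a=(x^{2s}-bx^{s}+c)(x^{2s}+bx^{s}+c)$, $c^{2}=-a$, $b^{2}=2c$, in the exceptional case $q\equiv 3\pmod 4$, $4\mid t$. Your proposal therefore necessarily departs from the paper; what it does is reconstruct the standard textbook proof, and soundly so: the reduction to $\gcd(t,q)=1$ is correct (for $p\mid t$ both sides fail, since $x^{t}-\eta=(x^{t/p}-\beta)^{p}$ while $p\nmid e$), the translation of irreducibility into $\ord_{d}(q)=t$ for $d=\ord(\zeta)$ is the right engine, and the prime-by-prime analysis with coprime composition is exactly how conditions (i) and (ii) arise, with (iii) isolated at $\ell=2$, $q\equiv3\pmod 4$. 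Your Sophie Germain identity is in fact the same factorization as the paper's remark: when $(q-1)/2$ is odd, $b\mapsto b^{4}$ is onto the squares, so every non-square $\eta$ equals $-4B^{4}$, and setting $c=2B^{2}$, $b=2B$ recovers the authors' $c^{2}=-a$, $b^{2}=2c$. Your route buys a self-contained proof where the paper buys brevity by citation; the two are fully compatible.

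Two local slips should be repaired, though neither undermines the plan. First, ``for a binomial reducibility is equivalent to having a root'' is not a consequence of prime degree alone (a degree-$5$ polynomial can factor as $2+3$); it is a theorem about binomials. Over $\Fq$ it has a cheap proof: if $\ell\mid q-1$, all $\ell$-th roots of unity lie in $\Fq$, so the roots of $x^{\ell}-\eta$ differ by scalars in $\Fq^{\ast}$ and generate the same extension, forcing all irreducible factors to have a common degree $k$ with $k\mid\ell$, hence $k=1$ or $k=\ell$; if $\ell\nmid q-1$, the $\ell$-th power map is a bijection on $\Fq^{\ast}$ and a root always exists. Second, your claim that for $q\equiv3\pmod4$ the order of $q$ modulo $2^{k}$ ``stalls at $2$ for all $k\ge1$'' is false once $k>v_{2}(q^{2}-1)$: for instance $\ord_{16}(3)=4$. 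Fortunately only $\ord_{8}(q)=2$ is needed: if $4\mid t$ and $\eta$ satisfies (i) and (ii), then a root of $x^{4}-\eta$ has order $d=4e$ with $v_{2}(d)=v_{2}(q-1)+2=3$, so $\ord_{d}(q)=2<4$ and $x^{4}-\eta$ is reducible, and reducibility propagates to any $t$ with $4\mid t$ by substituting $x^{t/4}$. With these two repairs your outline is a complete and correct proof of the lemma.
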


In fact, if $t=4s$, $q\equiv 3\pmod 4$ and $a$ is not a square in $\Bbb F_q$, then $x^t-a=(x^{2s}-bx^s+c)(x^{2s}+bx^s+c)$ is reducible over $\Bbb F_q$, where $b, c\in \Bbb F_q$, $c^2=-a$, $b^2=2c$.

We need the following results concerning the explicit factorization of the polynomial $x^{n}-1$ over $ \Bbb F_{q}$.


\begin{lem}\cite[Corollary 1]{MVO} Let $ \Bbb F_q$ be a finite field and $n$ a positive integer such that both $rad(n)|(q-1)$ and
either
 $q \not\equiv 3 \pmod {4}$ or $8\nmid n$. Set $m_1=\frac {n}{\gcd(n, q-1)}$, $l_1=\frac{q-1}{\gcd(n, q-1)}$, and  $\theta$  a  generator of $\Bbb F_q^*$. Then
the factorization of $x^{n} -1$ into irreducible factors in $\Bbb F_q [x ]$ is

$$\prod\limits_{t\mid m_{1}} \prod_{\mbox{\tiny$
\begin{array}{c}
1\leqslant u\leqslant \gcd(n,q-1)\\
\gcd(u,t)=1\\
\end{array}$
}}(x^{t}-\theta^{ul_{1}}).$$

Moreover,
for each $t| m_{1}$, the number of irreducible factors of degree $t$ is $\frac{\varphi(t)}{t}\cdot \gcd(n,q-1)$, where $\varphi$ denotes the Euler Totient function,  and the total number of irreducible factors is
$$\gcd(n,q-1)\cdot \prod_{\mbox{\tiny$
\begin{array}{c}
p\mid m_{1}\\
p~ prime\\
\end{array}$
}}\bigg(1+v_{p}(m_{1})\cdot \frac{p-1}{p}\bigg).$$
\end{lem}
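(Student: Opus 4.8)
The plan is to prove the three ingredients that together pin down the factorization: that each displayed binomial divides $x^n-1$, that the displayed binomials are pairwise distinct and irreducible, and that their degrees sum to $n$; since $x^n-1$ is monic and squarefree, a monic product of distinct irreducible divisors of degree $n$ must equal it. Throughout write $d:=\gcd(n,q-1)$, so $m_1=n/d$ and $l_1=(q-1)/d$. First I would record the arithmetic skeleton forced by $rad(n)\mid(q-1)$. Since $q\equiv 0\pmod p$ we have $p\nmid(q-1)$, hence $p\nmid rad(n)$ and $p\nmid n$, so $x^n-1$ is separable. Next, for any prime $\ell\mid m_1$ we have $v_\ell(n)>v_\ell(d)=\min(v_\ell(n),v_\ell(q-1))$, which forces $v_\ell(d)=v_\ell(q-1)\ge 1$; this yields the two coprimality facts that drive everything: $rad(t)\mid d$ for every $t\mid m_1$, and $\gcd(m_1,l_1)=1$.

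Next I would verify divisibility, distinctness, and the degree count. If $\xi^t=\theta^{ul_1}$ then $\xi^n=(\theta^{ul_1})^{n/t}=\theta^{u(q-1)m_1/t}=1$ because $t\mid m_1$, so every root of $x^t-\theta^{ul_1}$ is an $n$-th root of unity and the binomial divides $x^n-1$. Distinctness is equally direct: $x^t-\theta^{ul_1}=x^{t'}-\theta^{u'l_1}$ forces $t=t'$ and $(q-1)\mid (u-u')l_1$, i.e.\ $d\mid(u-u')$, which is impossible for distinct admissible $u,u'\in[1,d]$. Using $rad(t)\mid d$, a block count gives $\#\{1\le u\le d:\gcd(u,t)=1\}=d\cdot\varphi(t)/t$, so the total degree equals $d\sum_{t\mid m_1}\varphi(t)=d\,m_1=n$, matching $\deg(x^n-1)$.

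The core step is irreducibility, for which I would apply Serret's criterion to $\eta=\theta^{ul_1}$. From $q-1=dl_1$ one gets $\gcd(ul_1,q-1)=l_1\gcd(u,d)$, hence $o(\eta)=d/\gcd(u,d)$ and $(q-1)/o(\eta)=l_1\gcd(u,d)$. Condition (i), $rad(t)\mid o(\eta)$, follows because for each prime $\ell\mid t$ we have $\ell\mid d$ while $\ell\nmid u$ (as $\gcd(u,t)=1$), so $\ell$ survives in $d/\gcd(u,d)$. Condition (ii), $\gcd\!\big(t,l_1\gcd(u,d)\big)=1$, splits as $\gcd(t,l_1)=1$ (from $\gcd(m_1,l_1)=1$) and $\gcd(t,\gcd(u,d))\mid\gcd(t,u)=1$.

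The delicate point, and the step I expect to be the main obstacle, is Serret's condition (iii): if $4\mid t$ then $4\mid(q-1)$. This is exactly where the hypothesis ``$q\not\equiv 3\pmod 4$ or $8\nmid n$'' is used. If $q\equiv 1\pmod 4$ then $4\mid(q-1)$ automatically; if $q$ is even then $n$ is odd and $4\nmid t$; the only dangerous case is $q\equiv 3\pmod 4$, where $v_2(q-1)=1$, so the assumption $8\nmid n$ forces $v_2(m_1)=v_2(n)-1\le 1$ and hence $4\nmid t$, making (iii) vacuous. (This matches the remark after the criterion, which exhibits $x^{4s}-a$ as reducible precisely in the excluded situation.) With (i)--(iii) verified, every displayed binomial is irreducible, and combined with divisibility, distinctness, and the degree count the product equals $x^n-1$. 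The per-degree count $d\,\varphi(t)/t$ is already established; summing it via the multiplicativity of $\varphi(t)/t$ and $\sum_{j=0}^{a}\varphi(p^j)/p^j=1+a\tfrac{p-1}{p}$ (with $a=v_p(m_1)$) produces the total $\gcd(n,q-1)\prod_{p\mid m_1}\big(1+v_p(m_1)\tfrac{p-1}{p}\big)$.
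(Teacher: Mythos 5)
Your proposal is correct, but note that the paper itself contains no proof of this statement: it is Lemma 2.2, imported verbatim as Corollary 1 of \cite{MVO}, so the comparison can only be with the standard argument of that reference and with the analogous computations the paper carries out later. Your route is essentially that standard one, and all the load-bearing steps check out: the observation that $\ell\mid m_1$ forces $v_\ell(\gcd(n,q-1))=v_\ell(q-1)\geq 1$, which simultaneously yields $rad(t)\mid\gcd(n,q-1)$ for $t\mid m_1$ and $\gcd(m_1,l_1)=1$; the order computation $o(\theta^{ul_1})=\gcd(n,q-1)/\gcd(u,\gcd(n,q-1))$ via $\gcd(ul_1,q-1)=l_1\gcd(u,\gcd(n,q-1))$; the verification of Serret's conditions (i)--(ii); and the case analysis showing condition (iii) is vacuous exactly under the hypothesis ``$q\not\equiv 3\pmod 4$ or $8\nmid n$'' (with $v_2(m_1)\leq 1$ when $q\equiv 3\pmod 4$ and $8\nmid n$). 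Your block count $\#\{1\leq u\leq\gcd(n,q-1):\gcd(u,t)=1\}=\frac{\varphi(t)}{t}\gcd(n,q-1)$ and the multiplicativity computation $\sum_{d\mid p^k}\varphi(d)/d=1+k\frac{p-1}{p}$ are literally the same devices this paper deploys in its proof of Theorem 3.2, so your reconstruction is fully consistent with the authors' methodology. Two trivial loose ends you should tidy: Lemma 2.1 is stated for $t\geq 2$, so the factors with $t=1$ need the (immediate) remark that linear binomials are irreducible; and in the divisibility step, to conclude $x^t-\theta^{ul_1}$ divides $x^n-1$ from the fact that its roots are $n$-th roots of unity, you should note the binomial is itself separable (its derivative $tx^{t-1}$ is prime to it since $p\nmid t$), or simply invoke irreducibility plus a common root. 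Neither affects correctness.
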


\begin{lem}\cite[Corollary 2]{MVO} Let $ \Bbb F_q$ be a finite field and $n$ a positive integer such that  $rad(n)|(q-1)$,
 $q \equiv3 \pmod 4$, and   $8|n$. Set $m_2=\frac{n}{\gcd(n, q^2-1)}$, $l_2=\frac{q^2-1}{\gcd(n, q^2-1)}$, $l_1=\frac{q-1}{\gcd(n, q-1)}$, $r=\min\{v_2(n/2),v_2(q+1)\}$,  $\alpha$ a  generator of $\Bbb F_{q^2}^*$ satisfying $\theta=\alpha^{q+1}$.
Then the factorization of $x^{n} -1$ into irreducible factors in $\Bbb F_q [x ]$ is

$$\prod_{\mbox{\tiny$
\begin{array}{c}
t| m_{2}\\
t~odd\\
\end{array}$
}} \prod_{\mbox{\tiny$
\begin{array}{c}
1\leqslant v\leqslant \gcd(n,q-1)\\
\gcd(v,t)=1\\
\end{array}$
}}(x^{t}-\theta^{vl_{1}})\cdot\prod\limits_{t\mid m_{2}} \prod\limits_{u\in \mathcal{R}_{t}} (x^{2t}-(\alpha^{ul_{2}}+\alpha^{qul_{2}})x^{t}+\theta^{ul_{2}}), $$ where
$$\mathcal{R}_t=\bigg\{u\in \Bbb N: \begin{array}{l}
1\le u\le \gcd(n,q^{2}-1),\gcd(u,t)=1,
 \\ 2^{r}\nmid u, ~and~ u<\{qu\}_{\gcd(n,q^{2}-1)}\end{array}\bigg\}.$$
Note that  $\{a\}_{b}$ denotes the remainder of the division of $a$ by $b$.

Moreover,  for each $t$ odd with $t| m_{2}$, the numbers of irreducible binomials of degree $t$  is $\frac{\varphi(t)}{t}\cdot \gcd(n,q-1)$; the number of irreducible trinomials of degree $2t$ is

\begin{equation*}\left\{
   \begin{array}{ll}
    \frac{\varphi(t)}{2t}\cdot 2^{r_{}}\gcd(n,q-1), &\mbox{ if $ t$ is even},\\
       \frac{\varphi(t)}{2t}\cdot (2^{r_{}}-1)\gcd(n,q-1), &\mbox{ if $ t$ is odd}.\\
    \end{array}\right.
  \end{equation*}
The total number of irreducible factors of $x^{n}-1$ over $\Bbb F_q$ is

$$\gcd(n,q-1)\cdot \bigg(\frac1 2 +2^{r_{}-2}(2+v_{2}(m_{2}))\bigg)\cdot \prod_{\mbox{\tiny$
\begin{array}{c}
p|m_{2}\\
p~ odd ~prime\\
\end{array}$
}}\bigg(1+v_{p}(m_{2})\frac{p-1}{p}\bigg).$$
\end{lem}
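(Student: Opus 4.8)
The plan is to reduce to the factorization already known over the quadratic extension $\mathbb{F}_{q^2}$ and then \emph{descend} to $\mathbb{F}_q$ by Galois theory. Since $rad(n)\mid (q-1)\mid (q^2-1)$ and $q^2\equiv 1\pmod 4$ (as $q$ is odd), the field $\mathbb{F}_{q^2}$ satisfies the hypotheses of \cite[Corollary 1]{MVO}. Applying that result over $\mathbb{F}_{q^2}$ with the generator $\alpha$ of $\mathbb{F}_{q^2}^\ast$ and the parameters $m_2,l_2$, and writing $D:=\gcd(n,q^2-1)$, gives
\[
x^n-1=\prod_{t\mid m_2}\ \prod_{\substack{1\le u\le D\\ \gcd(u,t)=1}}\bigl(x^t-\alpha^{ul_2}\bigr)
\]
as a product of binomials irreducible over $\mathbb{F}_{q^2}$. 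The remaining task is to organize these $\mathbb{F}_{q^2}$-factors into $\mathbb{F}_q$-irreducible ones.

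Let $\sigma$ generate $\Gal(\mathbb{F}_{q^2}/\mathbb{F}_q)$, so $\sigma$ carries $x^t-\alpha^{ul_2}$ to its conjugate $x^t-\alpha^{qul_2}=x^t-\alpha^{\{qu\}_D\, l_2}$. The $\mathbb{F}_q$-irreducible factors of $x^n-1$ are exactly the products over the $\sigma$-orbits, and each orbit has size $1$ or $2$ because $[\mathbb{F}_{q^2}:\mathbb{F}_q]=2$. An orbit is a singleton precisely when $\alpha^{ul_2}\in\mathbb{F}_q$, i.e. $(q+1)\mid ul_2$. The key is to translate this into a $2$-adic condition: using $rad(n)\mid(q-1)$ one checks that every odd prime contributes trivially to $D\mid u(q-1)$, while at the prime $2$, where $v_2(q-1)=1$ and $v_2(q+1)\ge 2$, the requirement collapses to $v_2(u)\ge \min(v_2(n/2),v_2(q+1))=r$. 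Thus $\alpha^{ul_2}\in\mathbb{F}_q$ if and only if $2^r\mid u$. I expect this valuation bookkeeping, which pins down the exponent $r$ precisely, to be the main obstacle; the rest is formal.

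Given the dichotomy the factorization follows. When $2^r\mid u$, the binomial $x^t-\alpha^{ul_2}$ already lies in $\mathbb{F}_q[x]$ and, being irreducible over $\mathbb{F}_{q^2}$, is a fortiori irreducible over $\mathbb{F}_q$; rewriting $\alpha^{ul_2}=\theta^{vl_1}$ through $\theta=\alpha^{q+1}$ produces the first product. Such $u$ occur only for odd $t$, since $\gcd(u,t)=1$ with $t$ even forces $u$ odd, contradicting $2^r\mid u$ (as $r\ge 1$); this is why the binomial product runs over odd $t$. When $2^r\nmid u$, the factor and its distinct conjugate multiply to
\[
\bigl(x^t-\alpha^{ul_2}\bigr)\bigl(x^t-\alpha^{qul_2}\bigr)
=x^{2t}-\bigl(\alpha^{ul_2}+\alpha^{qul_2}\bigr)x^t+\theta^{ul_2}\in\mathbb{F}_q[x],
\]
irreducible over $\mathbb{F}_q$ as the product of a single conjugate pair; imposing $u<\{qu\}_D$ selects one representative per pair, yielding exactly the index set $\mathcal{R}_t$ and the trinomial product.

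For the counting statements I would count each index set directly. The number of $v\in[1,\gcd(n,q-1)]$ coprime to $t$ equals $\tfrac{\varphi(t)}{t}\gcd(n,q-1)$, giving the binomial count. For trinomials I would enumerate $\{1\le u\le D:\gcd(u,t)=1,\ 2^r\nmid u\}$ and halve it (each conjugate pair has one element with $u<\{qu\}_D$), distinguishing $t$ even, where no $u$ is discarded by $2^r\mid u$ and the factor is $2^r$, from $t$ odd, where the $\gcd(n,q-1)$ singleton binomials are removed and the factor is $2^r-1$. Summing the degree-$t$ counts over $t\mid m_2$ and using multiplicativity across the prime factorization $m_2=2^{v_2(m_2)}M$: writing each $t=2^j s$ with $s$ odd, the factor $\tfrac{\varphi(t)}{2t}$ equals $\tfrac12\tfrac{\varphi(s)}{s}$ for $j=0$ and $\tfrac14\tfrac{\varphi(s)}{s}$ for $j\ge 1$, so the prime $2$ contributes the bracket $\tfrac12+2^{r-2}(2+v_2(m_2))$ while each odd prime $p\mid m_2$ contributes $1+v_p(m_2)\tfrac{p-1}{p}$, giving the stated total. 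These sums are routine; the conceptual content lies entirely in the descent and in the $2$-adic identity fixing $r$.
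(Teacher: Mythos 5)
The paper states this lemma without proof: it is imported verbatim from \cite{MVO}, so there is no in-paper proof to compare against line by line. Your blind argument is correct, and it is worth noting that your route --- factor $x^n-1$ into irreducible binomials over $\mathbb{F}_{q^2}$ via Lemma 2.2, then descend along $\mathrm{Gal}(\mathbb{F}_{q^2}/\mathbb{F}_q)$ by merging Frobenius-conjugate pairs --- is exactly the method this paper itself develops in Section 3 for its new results (Lemma 2.4 specialized to $w=2$; compare the proof of Theorem 3.6, which runs the same computation of $\gcd(q+1,l_2)$ to decide when $\alpha^{ul_2}\in\mathbb{F}_q$), rather than whatever direct-over-$\mathbb{F}_q$ derivation \cite{MVO} originally used. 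Your central identity checks out: since $rad(n)\mid(q-1)$, every odd prime of $q+1$ already divides $l_2$ to full order, so $(q+1)\mid ul_2$ collapses to $v_2(u)\ge r$; together with $\gcd(n,q^2-1)=2^{r}\gcd(n,q-1)$ (from $v_2(q-1)=1$, $v_2(q+1)\ge 2$, $v_2(n)\ge 3$) this yields the counts $\frac{\varphi(t)}{2t}\cdot 2^{r}\gcd(n,q-1)$ for even $t$, $\frac{\varphi(t)}{2t}\cdot(2^{r}-1)\gcd(n,q-1)$ for odd $t$, and your $2$-part bookkeeping correctly produces the bracket $\frac12+2^{r-2}(2+v_2(m_2))$. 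Two steps you wave at should be written out in a complete proof: (i) irreducibility over $\mathbb{F}_q$ of the product of a distinct conjugate pair, which is precisely the paper's Lemma 2.4 with $w=2$; and (ii) the re-indexing of the surviving binomials as $x^t-\theta^{vl_1}$, which requires observing that $\alpha^{2^{r}l_2}$ and $\theta^{l_1}$ generate the same (unique) subgroup of order $\gcd(n,q-1)$ and that the coprimality condition transfers because $rad(t)\mid\gcd(n,q-1)$. Both are routine, so the proposal is sound; what it buys over the cited source is uniformity with the descent machinery of Section 3, making Lemma 2.3 a special case of the paper's own technique rather than an external input.
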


\begin{lem}\cite[Lemma 14]{WZF} Let $n$ be a positive integer, $q$  a prime power coprime with $n$, and $w$  the smallest positive integer such that $q^{w} \equiv 1 \pmod {rad(n)}$. If $f (x)$ is an irreducible factor of $\Phi_{n} (x)$ over $\Bbb F_{q^{w}}$, then $f^{\sigma}(x)$ is also an irreducible factor of $\Phi_{n} (x)$ over $\Bbb F_{q^{w}}$, where $\sigma : \Bbb F_{q^{w}} \longrightarrow
\Bbb F_{q^{w}} ; \alpha \longmapsto \alpha ^{q}$, is the Frobenius mapping of $\Bbb F_{q^{w}} $ over $\Bbb F_{q} $. Moreover, $f(x)f^{\sigma}(x)f^{\sigma^{2}}(x)\cdots f^{\sigma^{w-1}}(x)$ is an irreducible factor of  $\Phi_{n} (x)$ over $\Bbb F_{q} $.
\end{lem}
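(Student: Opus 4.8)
The plan is to handle the two assertions separately, using that $\Gal(\Bbb F_{q^w}/\Bbb F_q)$ is cyclic of order $w$ generated by $\sigma$, and that the coefficients of $\Phi_n(x)$ lie in the prime field $\Bbb F_p\subseteq\Bbb F_q$. For the first assertion I would extend $\sigma$ coefficientwise to a ring automorphism of $\Bbb F_{q^w}[x]$. Since $\sigma$ fixes $\Bbb F_q$ pointwise and $\Phi_n(x)$ has coefficients in $\Bbb F_p$, we get $\Phi_n^{\sigma}(x)=\Phi_n(x)$. Applying $\sigma$ to a factorization $\Phi_n(x)=f(x)h(x)$ in $\Bbb F_{q^w}[x]$ then shows $f^{\sigma}(x)\mid\Phi_n(x)$, and because $\sigma$ carries irreducibles to irreducibles of the same degree, $f^{\sigma}$ is again an irreducible factor. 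This part is essentially formal.

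For the second assertion, first note that $g(x):=f\,f^{\sigma}\cdots f^{\sigma^{w-1}}$ is fixed by $\sigma$, which cyclically permutes the factors because $\sigma^{w}=\mathrm{id}$ forces $f^{\sigma^{w}}=f$; hence $g\in\Bbb F_q[x]$. The crucial input is the divisibility $w\mid\ord_n(q)$: since $rad(n)\mid n$ and $w=\ord_{rad(n)}(q)$, we have $\ord_{rad(n)}(q)\mid\ord_n(q)$. Writing $d:=\ord_n(q)$, the standard description of cyclotomic factorizations says every irreducible factor of $\Phi_n$ over $\Bbb F_{q^{s}}$ has degree $\ord_n(q^{s})$; combined with $w\mid d$ this gives $\ord_n(q^{w})=d/w$ and, for every $w'\mid w$, $\gcd(d,w')=w'$. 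In particular $\deg f=d/w$, while each irreducible factor of $\Phi_n$ over $\Bbb F_q$ has degree $d=w\deg f=\deg g$.

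The heart of the argument, and the step I expect to be the main obstacle, is showing that the Frobenius orbit of $f$ has full length $w$, i.e. that $f,f^{\sigma},\dots,f^{\sigma^{w-1}}$ are pairwise distinct. I would argue by contradiction: if $f^{\sigma^{e}}=f$ for some $0<e<w$, then all coefficients of $f$ lie in the fixed field $\Bbb F_{q^{\gcd(e,w)}}$, so $f$ is an irreducible factor of $\Phi_n$ over the proper subfield $\Bbb F_{q^{w'}}$ with $w'=\gcd(e,w)<w$ (divisibility of $\Phi_n$ by $f$ descends to $\Bbb F_{q^{w'}}[x]$ since both polynomials lie there). Its degree would then be $\ord_n(q^{w'})=d/\gcd(d,w')=d/w'>d/w=\deg f$, a contradiction; this is precisely where $w\mid d$ is indispensable. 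Consequently $g$ is a squarefree product of $w$ distinct irreducible factors of $\Phi_n$, lies in $\Bbb F_q[x]$, divides $\Phi_n$ there, and has $\deg g=d$. Since $\Phi_n$ is separable and each of its irreducible factors over $\Bbb F_q$ has degree exactly $d$, any nontrivial $\Bbb F_q$-factorization of the degree-$d$ divisor $g$ would produce a factor of degree $d$ together with a nonconstant complementary factor, which is impossible; hence $g$ is irreducible over $\Bbb F_q$, completing the proof.
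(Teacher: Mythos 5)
Your proposal is correct and complete, but there is nothing in the paper to compare it against: the paper states this as Lemma 2.4 and imports it verbatim from \cite{WZF} (Lemma 14) without reproducing a proof, using it only as a black box to glue Frobenius-conjugate factors over $\Bbb F_{q^w}$ into irreducible factors over $\Bbb F_q$. Judged on its own merits, your argument is the standard Galois-descent proof and it is sound at every step: the first assertion is, as you say, formal ($\sigma$ fixes $\Phi_n$ coefficientwise); the divisibility $w\mid d$ with $d=\ord_n(q)$ follows correctly from $rad(n)\mid n$; and the degree formula $\ord_n(q^s)=d/\gcd(d,s)$ gives $\deg f=d/w$. You correctly identified the one genuinely nontrivial point, which statements of this kind routinely gloss over: that the Frobenius orbit of $f$ has full length $w$, so that $g=f f^{\sigma}\cdots f^{\sigma^{w-1}}$ is squarefree and hence actually divides the separable $\Phi_n$ over $\Bbb F_q$. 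Your contradiction argument for this --- if $f^{\sigma^e}=f$ with $0<e<w$ then $f$ descends to $\Bbb F_{q^{\gcd(e,w)}}[x]$, where it would be an irreducible factor of $\Phi_n$ of degree $d/\gcd(e,w)>d/w=\deg f$ --- is exactly right, and it is precisely here that $w\mid d$ is used. The only cosmetic remark is that your closing step can be said more directly: $g\in\Bbb F_q[x]$ divides $\Phi_n$ in $\Bbb F_q[x]$, every monic irreducible $\Bbb F_q$-factor of $\Phi_n$ has degree $d=\deg g$, so $g$ equals one of them; equivalently, one may note that $f$ divides a unique irreducible factor $F$ of $\Phi_n$ over $\Bbb F_q$ and count that $F$ splits over $\Bbb F_{q^w}$ into exactly $w$ conjugates of $f$, which recovers $F=g$.
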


\section{Factorization of $x^{n}-1$ over $\Bbb F_q$}

In this section, we always assume that $\gcd(n,q)=1$ and the smallest positive integer $w$ satisfying $q^{w} \equiv 1 \pmod {rad(n)}$ is a prime.

For each positive  divisor $d$ of $n$,  the order of $q$ modulo $rad(d)$ is 1 or $w$ by $rad(d)| rad(n)$ and $w$ a prime.


Suppose that   $x^{n}-1=\prod\limits_{d|n}\Phi_{d}(x)=\prod\limits_{d|n}f_{d,1}(x)f_{d,2}(x)\cdots f_{d,s_{d}}(x)$ is the irreducible factorization of $x^{n}-1$ over $\Bbb F_{q^w}$. We define the set  $$D=\{d\mid n: 1\le  d\le n, rad(d)|(q-1)\}.$$ Hence those irreducible polynomials can be divided into two parts:
$$X^{(1)}=\{f_{d,j}(x):d\in D, 1\leqslant j\leqslant s_{d}\}, X^{(2)}=\{f_{d,j}(x):d\notin D, 1\leqslant j\leqslant s_{d}\}.$$

 From \cite{WZF}, there is an equivalence relation $\sim$ on $X^{(2)}$ as follows:   $$f_{d,i}(x)\sim f_{d,j}(x)~ {if}~ and ~only~ if~ f_{d,j}(x)= f_{d,i}^{\sigma^{k}}(x) ~for ~some~ 0 \leqslant k\leqslant w - 1. $$
By  Lemma 2.4, we have the irreducible factorization of $x^{n}-1$ over $\Bbb F_{q}$

$$x^{n}-1=\prod\limits_{g(x)\in X^{(1)} }g(x) \prod\limits_{f(x)\in X^{\ast} } f(x)f^{\sigma}(x)f^{\sigma^{2}}(x)\cdots f^{\sigma^{w-1}}(x),$$
 where $X^{\ast}$ is the complete system  of equivalence class representatives of $X^{(2)}$ relative to $\sim$.
\begin{lem} Let $w$ be a prime.

$(1)$ Then we have  $$\gcd(q-1, \frac{q^w-1}{q-1})=\left\{\begin{array}{ll}
1, &\mbox{ if $w\nmid (q-1)$,}\\
w, &\mbox{ if $w|(q-1)$.
}\end{array}\right.$$

$(2)$ If $w$ is  odd, then $v_w(\frac{q^w-1}{q-1})=1$.

$(3)$ If $w\nmid(q-1)$, then $w\nmid\frac{q^w-1}{q-1}$.
\end{lem}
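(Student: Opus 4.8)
The plan is to base everything on the single identity $\frac{q^{w}-1}{q-1}=1+q+q^{2}+\cdots+q^{w-1}$, which converts all three assertions into short congruence computations. For part $(1)$ I would reduce this sum modulo $q-1$: since $q\equiv 1\pmod{q-1}$, every power satisfies $q^{i}\equiv 1$, so $\frac{q^{w}-1}{q-1}\equiv w\pmod{q-1}$. Hence $\gcd\bigl(q-1,\frac{q^{w}-1}{q-1}\bigr)=\gcd(q-1,w)$, and because $w$ is prime this is $w$ exactly when $w\mid(q-1)$ and is $1$ otherwise, which is the claim. This step is routine.

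For part $(3)$ I would instead reduce modulo $w$. Writing $N=\frac{q^{w}-1}{q-1}$, Fermat's little theorem gives $q^{w}\equiv q\pmod{w}$ for every integer $q$, hence $(N-1)(q-1)=N(q-1)-(q-1)=q^{w}-q\equiv 0\pmod{w}$. When $w\nmid(q-1)$ the factor $q-1$ is a unit modulo the prime $w$, forcing $w\mid(N-1)$, i.e.\ $N\equiv 1\pmod{w}$; in particular $w\nmid N$. Note this same congruence, combined with part $(1)$, also confirms the gcd value in the case $w\mid(q-1)$.

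The real work, and the only place the hypothesis that $w$ is odd is used, is part $(2)$; here the relevant regime is $w\mid(q-1)$, since otherwise part $(3)$ already gives valuation $0$. Setting $a=v_{w}(q-1)\ge 1$, I would write $q=1+w^{a}u$ with $w\nmid u$ and expand $q^{i}=(1+w^{a}u)^{i}\equiv 1+i\,w^{a}u\pmod{w^{a+1}}$, the higher binomial terms vanishing because $2a\ge a+1$. Summing over $0\le i\le w-1$ then yields
\[
\frac{q^{w}-1}{q-1}\;\equiv\; w+w^{a}u\cdot\frac{w(w-1)}{2}\pmod{w^{a+1}}.
\]
The crucial point is that $w$ odd makes $\frac{w-1}{2}$ an integer, so the second term carries an extra factor of $w$ and is divisible by $w^{a+1}$; thus $\frac{q^{w}-1}{q-1}\equiv w\pmod{w^{a+1}}$ with $a+1\ge 2$, which forces $v_{w}\!\left(\frac{q^{w}-1}{q-1}\right)=1$. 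I expect the main obstacle to be purely this $w$-adic bookkeeping: one must keep the expansion accurate to exactly the modulus $w^{a+1}$ and verify that the triangular-number factor $\frac{w(w-1)}{2}$ supplies the needed cancellation, since for $w=2$ this argument breaks down and the valuation can exceed $1$.
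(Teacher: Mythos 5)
Your proof is correct and follows essentially the same route as the paper's: part (1) by reducing the geometric sum $1+q+\cdots+q^{w-1}$ modulo $q-1$ to get $w$, part (2) by a $w$-adic binomial expansion in which the integrality of $\frac{w(w-1)}{2}$ for odd $w$ supplies the decisive extra factor of $w$ (the paper expands $(1+aw^r)^w$ to conclude $v_w(q^w-1)=r+1$, while you equivalently expand each term of the sum modulo $w^{a+1}$ --- a cosmetic difference), and part (3) by Fermat's little theorem (the paper computes the sum modulo $w$ as $b^{w-1}\not\equiv 0$ where $q-1\equiv b \pmod w$, while you use $q^w\equiv q \pmod w$ to get $N\equiv 1 \pmod w$). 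Your observation that part (2) tacitly requires $w\mid(q-1)$ --- since otherwise part (3) gives valuation $0$ --- matches the paper exactly, whose proof of (2) likewise begins from the assumption $v_w(q-1)=r\ge 1$.
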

 \begin{proof} (1) Since $\frac{q^{w}-1}{q-1}=1+q+q^{2}+\cdots+q^{w-1}=w+(q-1)+(q^{2}-1)+\cdots+(q^{w-1}-1)$, $\gcd(q-1,\frac{q^{w}-1}{q-1})=\gcd(q-1, w)=1$ or $w$.

 (2) If $w$ is an odd prime and $v_w(q-1)=r\ge 1$, i.e. $q=1+aw^r$, $\gcd (a, w)=1$. Then $q^w-1\equiv aw^{r+1} \pmod {w^{r+2}}$ and $v_w(\frac{q^w-1}{q-1})=1$.

 (3) Suppose that $q-1=aw+b, $ where $a,b$ are integers and $ 0<b<w$.  Then $\frac{q^{w}-1}{q-1}\equiv 1+(b+1)+\cdots+(b+1)^{w-1}\equiv b^{w-1}\pmod w$.
 As $w$ is a prime and $b\neq0$, we have $v_w(\frac{q^w-1}{q-1})=0$.
 \end{proof}

  \subsection{ $w$ is an odd prime.} In the subsection, we always assume  that $\ord_{rad(n)}(q)=w$ is  an odd prime.

First, we consider the case:  either $q\not\equiv3\pmod 4$ or $8\nmid n$.
 Let $m_w=\frac{n}{\gcd(n,q^w-1)}$, $l_w=\frac{q^w-1}{\gcd(n,q^w-1)}$, and  $\delta$ a  generator of $\Bbb F_{q^w}^*$. By Lemma 2.2, there is a irreducible factorization of $x^n-1$ over $\Bbb F_{q^{w}}$:

 \begin{equation}
 x^{n} -1=\prod\limits_{t|m_w} \prod_{\mbox{\tiny$
\begin{array}{c}
1\leqslant u\leqslant \gcd(n,q^{w}-1)\\
\gcd(u,t)=1\\
\end{array}$
}}(x^{t}-\delta^{ul_w}).
\end{equation}

  Next we will explicitly give the irreducible factorization of $x^{n}-1$ over $\Bbb F_{q}$. In fact, we only investigate whether $\delta^{ul_w}\in \Bbb F_q$.

\begin{thm}  Suppose that $\ord_{rad(n)}(q)=w$, where $w$ is an odd prime, and either $q\not\equiv3 \pmod 4$ or $8\nmid n$.   Let $n=w^{v_w(n)}n_1n_2$, $rad(n_1)|(q-1)$, $rad(n_2)|\frac{q^w-1}{q-1}$, $\gcd(w,n_1n_2)=1$.   Set $m_w=\frac{n}{\gcd(n,q^w-1)}$, $l_w=\frac{q^w-1}{\gcd(n,q^w-1)}$, $m_{w,1}=\frac{n_1}{\gcd(n_1, q-1)}$, $l_{1}=\frac{q-1}{\gcd(n, q-1)}$,  $\delta$ a  generator of $\Bbb F^{\ast}_{q^w}$ and  $\theta=\delta^{\frac{q^w-1}{q-1}}$. Then

$(1)$ The irreducible factorization of  $x^{n}-1$  over $\Bbb F_{q}$  is

 \begin{equation}\prod_{\mbox{\tiny$
\begin{array}{c}
 t|m_{w,1}\\
\end{array}$
}}
 \prod_{\mbox{\tiny$
\begin{array}{c}
1\le u'\le \gcd(n, q-1)\\
\gcd(u', t)=1
\end{array}$
}}(x^{t}-\theta^{u'l_{1}})\prod_{\mbox{\tiny$
\begin{array}{c}
t|m_w\\ u\in \mathcal{S}_{t}
\end{array}$
}}\prod\limits_{k=0}^{w-1} (x^{t}-\delta^{ul_{w}})^{\sigma^{k}},\end{equation}
where $$\mathcal{S}_{t}=\bigg\{u\in \Bbb N: \begin{array}{l}
1\le u\le \gcd(n,q^{w}-1),\gcd(u,t)=1,
 \\ \frac{q^{w}-1}{q-1}\nmid ul_{w}, u=\min\{u, qu, \cdots, q^{w-1}u\}_{\gcd(n, q^w-1)}\end{array}\bigg\}$$
and $u=\min\{u, qu, \cdots, q^{w-1}u\}_{\gcd(n, q^w-1)}$ denotes that $u$ is   the minimum remainder modulo $\gcd(n, q^w-1)$ in  the set $\{u, qu,\cdots, q^{w-1}u\}$.

$(2)$ For each $t|m_{w,1}$, the number of irreducible factors of degree $t$ and $wt$ are $\frac {\varphi(t)}t\cdot\gcd (n, q-1)$ and  $\frac{\varphi(t)}{wt}\cdot(\gcd(n, q^w-1)-\gcd(n, q-1))$, respectively. For each $t|m_w$ and $t\nmid m_{w,1}$, the number of irreducible factors of degree $wt$ is $\frac{\varphi(t)}{wt}\cdot\gcd(n, q^w-1)$. The total number of irreducible factors is

\begin{eqnarray*}&&\prod_{\mbox{\tiny$
\begin{array}{c}
p| m_{w}\\
p~prime\\
\end{array}$
}}(1+v_p(m_{w})\frac {p-1}p)\cdot \frac{\gcd(n, q^w-1)}w\\ &+&\prod_{\mbox{\tiny$
\begin{array}{c}
p| m_{w,1}\\
p~prime\\
\end{array}$
}}(1+v_p(m_{w,1})\frac{p-1}p)\cdot \frac{w-1}{w}\cdot\gcd(n, q-1).
\end{eqnarray*}
\end{thm}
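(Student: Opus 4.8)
The plan is to start from the factorization (3.1) of $x^{n}-1$ over $\mathbb{F}_{q^{w}}$, which is exactly Lemma 2.2 applied over $\mathbb{F}_{q^{w}}$: the hypothesis $\ord_{rad(n)}(q)=w$ gives $rad(n)\mid(q^{w}-1)$, and since $8\mid n$ forces $q\equiv1\pmod 4$ and hence $q^{w}\equiv1\pmod 4$, the parity hypothesis transfers from $q$ to $q^{w}$. Every factor in (3.1) is a binomial $x^{t}-\delta^{ul_{w}}$ that is irreducible over $\mathbb{F}_{q^{w}}$, hence a fortiori irreducible over $\mathbb{F}_{q}$ as soon as its coefficients lie in $\mathbb{F}_{q}$. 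Thus the whole problem reduces to deciding, for each pair $(t,u)$, whether $\delta^{ul_{w}}\in\mathbb{F}_{q}$: if so the binomial is already an $\mathbb{F}_{q}$-irreducible factor, and if not it must be merged with its $w$ Frobenius conjugates via Lemma 2.4.

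First I would establish the membership criterion. Writing $g=\gcd(n,q^{w}-1)$, $g_{1}=\gcd(n,q-1)$ and $h=g/g_{1}$ (an integer since $(q-1)\mid(q^{w}-1)$ gives $g_{1}\mid g$), the element $\delta^{ul_{w}}$ lies in $\mathbb{F}_{q}^{\ast}$ iff its order divides $q-1$, i.e.\ iff $\frac{q^{w}-1}{q-1}\mid ul_{w}$, which rewrites as $g\mid u(q-1)$ and, using $\gcd(g,q-1)=g_{1}$, as $h\mid ul_{1}$. Invoking Lemma 3.1 to get $\gcd(h,l_{1})=1$, this is simply $h\mid u$. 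For the class $h\mid u$ I would set $u=hu'$ with $1\le u'\le g_{1}$ and verify $\delta^{hu'l_{w}}=\theta^{u'l_{1}}$, where $\theta=\delta^{(q^{w}-1)/(q-1)}$ generates $\mathbb{F}_{q}^{\ast}$. Matching the index sets then needs a prime-by-prime comparison of $m_{w}=n/g$ with $m_{w,1}=n_{1}/\gcd(n_{1},q-1)$ showing that the largest divisor of $m_{w}$ coprime to $h$ is exactly $m_{w,1}$; since $\gcd(hu',t)=1$ splits into $\gcd(h,t)=1$ and $\gcd(u',t)=1$, this identifies the $h\mid u$ binomials precisely with the first product $\prod_{t\mid m_{w,1}}\prod_{u'}(x^{t}-\theta^{u'l_{1}})$.

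For the complementary class $h\nmid u$, the binomial is not defined over $\mathbb{F}_{q}$. The Frobenius $\sigma$ sends $x^{t}-\delta^{ul_{w}}$ to $x^{t}-\delta^{qul_{w}}$, i.e.\ acts by $u\mapsto qu$ modulo $g$ (this action preserves both $\gcd(u,t)=1$ and $h\nmid u$, as $\gcd(q,th)=1$). Because $\delta^{ul_{w}}\notin\mathbb{F}_{q}$ and $w$ is prime, each orbit has size exactly $w$, and I would choose the representative $u=\min\{u,qu,\dots,q^{w-1}u\}_{g}$, which together with $\gcd(u,t)=1$ and $h\nmid u$ recovers $\mathcal{S}_{t}$. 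A tower argument shows that for a root $\zeta$ one has $[\mathbb{F}_{q}(\zeta):\mathbb{F}_{q}]=wt$ (since $\zeta$ forces $\delta^{ul_{w}}$, of degree $w$ over $\mathbb{F}_{q}$, into $\mathbb{F}_{q}(\zeta)$, while $[\mathbb{F}_{q^{w}}(\zeta):\mathbb{F}_{q^{w}}]=t$), so the orbit product $\prod_{k=0}^{w-1}(x^{t}-\delta^{ul_{w}})^{\sigma^{k}}$ is the minimal polynomial of $\zeta$ and hence irreducible of degree $wt$; equivalently this is Lemma 2.4. This gives the second product and proves part (1). For part (2) I would use $\#\{u:1\le u\le N,\ \gcd(u,t)=1\}=\frac{\varphi(t)}{t}N$, valid whenever $rad(t)\mid N$ (here $rad(t)\mid rad(n)$ divides $g_{1}$ for $t\mid m_{w,1}$ and divides $g$ for $t\mid m_{w}$), together with $\sum_{t\mid m}\frac{\varphi(t)}{t}=\prod_{p\mid m}\bigl(1+v_{p}(m)\tfrac{p-1}{p}\bigr)$. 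The degree-$t$ count is read off the first product as $\frac{\varphi(t)}{t}g_{1}$; the degree-$wt$ count is $\frac1w$ of the class-$(\text{II})$ indices, namely $\frac1w\bigl(\frac{\varphi(t)}{t}g-\frac{\varphi(t)}{t}g_{1}\bigr)$ for $t\mid m_{w,1}$ and $\frac1w\cdot\frac{\varphi(t)}{t}g$ for $t\mid m_{w}$ with $t\nmid m_{w,1}$, and summing over $t$ yields the stated total.

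The step I expect to be most delicate is the index-set matching of paragraph two, and in particular the bookkeeping at the prime $w$. When $w\mid n$ one has $w\mid(q-1)$ by Fermat, so certain factors of $\Phi_{w^{k}}$ with $w^{k}\in D$ are in fact \emph{not} defined over $\mathbb{F}_{q}$ and belong to class $(\text{II})$; hence the correct dichotomy is the field-membership criterion $\delta^{ul_{w}}\in\mathbb{F}_{q}$ rather than the $D$ versus non-$D$ split, and the entire argument must be phrased in those terms. Verifying uniformly — across the primes dividing $n_{1}$, those dividing $n_{2}$, and the prime $w$ — that $\gcd(h,l_{1})=1$ and that the $h$-coprime part of $m_{w}$ equals $m_{w,1}$ (which is where Lemma 3.1 is essential) is the crux on which both parts rest.
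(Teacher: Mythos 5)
Correct, and essentially the paper's own argument: both start from the $\mathbb{F}_{q^w}$-factorization (3.1) supplied by Lemma 2.2 (your transfer of the parity hypothesis from $q$ to $q^w$ is the needed justification), characterize the binomials already defined over $\mathbb{F}_q$ by the criterion $\frac{q^w-1}{q-1}\mid ul_w$, merge the remaining ones in Frobenius orbits of exact size $w$ via Lemma 2.4 (your tower argument is a valid stand-alone substitute), and count using $\sum_{t\mid m}\frac{\varphi(t)}{t}=\prod_{p\mid m}\bigl(1+v_p(m)\frac{p-1}{p}\bigr)$. The only divergence is bookkeeping: the paper runs three cases ($\gcd(n,q-1,\frac{q^w-1}{q-1})=1$; $=w$ with $v_w(n)\ge v_w(q^w-1)$; $=w$ with $v_w(n)<v_w(q^w-1)$) and computes $\gcd(\frac{q^w-1}{q-1},l_w)$ separately in each, whereas your uniform criterion $h\mid u$ with $h=\gcd(n,q^w-1)/\gcd(n,q-1)$, together with the identity $hl_w=l_1\cdot\frac{q^w-1}{q-1}$ (so $\delta^{hu'l_w}=\theta^{u'l_1}$ exactly), packages the same Lemma 3.1 information prime by prime, and your flagged crux claims --- $\gcd(h,l_1)=1$ and that the $h$-coprime part of $m_w$ equals $m_{w,1}$ --- do hold in all of the paper's cases, so the unified write-up is sound.
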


\begin{proof}

In Eq. (3.1), by  \cite{MVO} for each divisor $t$ of $m_w$ the number of irreducible binomials of degree $t$ in $\Bbb F_{q^w}[x]$ is
$\frac{\varphi(t)}t\cdot\gcd(n, q^w-1)$.
Moreover,  $x^t-\delta^{ul_w}$ is also irreducible in $\Bbb F_q[x]$ if and only if $\frac{q^w-1}{q-1}| ul_w$ if and only if $\delta^{ul_w}\in \Bbb F_q$. In the following, we investigate it.

 Suppose that $\gcd(n,q-1,\frac{q^w-1}{q-1})=1$. By Lemma 3.1, we have that $n=n_1 n_2$, $rad(n_1)|(q-1)$, $rad(n_2)|\frac{q^w-1}{q-1}$ and $\gcd(n_1,n_2)=1$. So $\gcd(n, q^w-1)=\gcd(n,q-1)\gcd(n, \frac{q^w-1}{q-1})$
and
$$\gcd(\frac{q^w-1}{q-1}, l_w)=\gcd(\frac {q^w-1}{q-1}, \frac {q-1}{\gcd(n, q-1)}\cdot\frac{\frac{q^w-1}{q-1}}{\gcd(n, \frac{q^w-1}{q-1})})=\frac{\frac{q^w-1}{q-1}}{\gcd(n, \frac{q^w-1}{q-1})}.$$
Hence, $x^t-\delta^{ul_w}\in \Bbb F_q[x]$ is irreducible  if and only if  $\frac{q^w-1}{q-1
}| ul_w$  if and only if $\gcd(n, \frac{q^w-1}{q-1})| u$.
We have  $m_w=\frac{n}{\gcd(n,q^w-1)}=\frac{n_{1}}{\gcd(n_{1},q-1)}\cdot \frac{n_{2}}{\gcd(n_{2},\frac{q^w-1}{q-1})}$, and set $$m_{w,1}=\frac{n_{1}}{\gcd(n_1,q-1)}.$$



If $t| m_{w,1}$, we have $x^t-\delta^{ul_w}\in\Bbb F_q[x]$ is also irreducible if and only if $u=\gcd(n_{}, \frac{q^w-1}{q-1})u'$ for $1\le u'\le \gcd(n, q-1)$ and $\gcd(t,u')=1$. Since $\gcd(u',t)=1$ and $rad(t)\mid rad(n_1)\mid \gcd(n,q-1)$, every prime that divides $t$ also divides $\gcd(n,q-1)$. Let $p_{1},p_{2},\cdots,p_{k}$ be the list of primes that divide $t$. It follows that there exist $(1 - \frac{1}{p_{1}} ) \cdot \gcd(q - 1, n)$ numbers less or equal to
$\gcd(q -1, n)$ that do not have any common factor with $p_{1}$.
Inductively we conclude that $(1-\frac{1}{p_{1}})(1-\frac{1}{p_{2}})\cdots (1-\frac{1}{p_{k}})\cdot \gcd(n,q-1)=\frac{\varphi(t)}{t}\gcd(n,q-1)$ numbers without any common factor with $t$.
Hence
there exist $\frac{\varphi(t)}t\cdot
\gcd(n, q-1)$   irreducible binomials  of degree $t$ in $\Bbb F_{q^w}[x]$ that are in $\Bbb F_q[x]$. On the other hand, for each binomial $x^t-\delta^{ul_w}\in \Bbb F_{q^w}[x]$,  $\gcd(n, \frac{q^w-1}{q-1})\nmid u$,  there are $w$  conjugate binomials in $\Bbb F_{q^w}[x]$ such that their product generates an irreducible polynomial in $\Bbb F_q[x]$. Thus the number of irreducible polynomials of degree $wt$ in $\Bbb F_q[x]$ is
$$\frac {\varphi(t)}{wt}\cdot( \gcd(n,q^w-1)- \gcd(n, q-1)).$$

If $t\nmid m_{w,1}$, we have
there is not any binomial polynomial of degree $t$ in $\Bbb F_{q^w}[x]$ that is also in $\Bbb F_q[x]$. Hence the number of irreducible polynomials of degree $wt$ in $\Bbb F_q[x]$ is
$$\frac {\varphi(t)}{wt}\cdot \gcd(n,q^w-1).$$
Hence we have the irreducible factorization of $x^n-1$ over $\Bbb F_q$ in (3.2).

Moreover, observe now that the function $\varphi(t)$ is a multiplicative function, then $\sum_{t\mid m_{w}}\frac {\varphi(t)}{t}$ and $\sum_{t\mid m_{w,1}}\frac {\varphi(t)}{t}$ are also
multiplicative and thus it is enough to calculate this sum for powers of primes. In this case
    we have
\begin{eqnarray*}&&\sum_{d|p^{k}}\frac{\varphi(d)}{d}=1+k(1-\frac 1 p).
\end{eqnarray*}
Finally, the total number of irreducible factors of $x^n-1$ in $\Bbb F_q[x]$ is
\begin{eqnarray*}&&\sum_{t|m_{w,1}}\frac{\varphi(t)}{wt}\cdot(\gcd(n, q^w-1)+(w-1)\gcd(n, q-1))+\sum_{t{\nmid} m_{w,1}}\frac{\varphi(t)}{t}\cdot \frac{\gcd(n, q^w-1)}{w}\\
&=&\sum_{t|m_{w}}\frac{\varphi(t)}{t}\cdot \frac{\gcd(n,q^w-1)}{w}+\sum_{t|m_{w,1}}\frac{\varphi(t)}{t}\cdot \frac{w-1}{w}\cdot\gcd(n, q-1) \\
&=&\prod_{\mbox{\tiny$
\begin{array}{c}
p| m_{w}\\
p~prime\\
\end{array}$
}}(1+v_p(m_{w})\frac {p-1}p)\cdot \frac{\gcd(n, q^w-1)}w\\ &+&\prod_{\mbox{\tiny$
\begin{array}{c}
p| m_{w,1}\\
p~prime\\
\end{array}$
}}(1+v_p(m_{w,1})\frac{p-1}p)\cdot \frac{w-1}{w}\cdot\gcd(n, q-1) .
\end{eqnarray*}

 Suppose that   $\gcd(n, q-1, \frac{q^w-1}{q-1})=w$, where $w$ is an odd prime. Let $n=w^{v_w(n)} n_1 n_2$, where $rad(n_1)|(q-1)$, $rad(n_2)|\frac{q^w-1}{q-1}$, and $\gcd(w, n_1n_2)=1$.

If  $v_w(n)\ge v_w(q^w-1)$, by Lemma 3.1 $\gcd(n, q^w-1)=\gcd( n, q-1)\gcd(n, \frac{q^w-1}{q-1})$ and
\begin{eqnarray*}\gcd(\frac{q^w-1}{q-1}, l_w)&=&\gcd(\frac{q^w-1}{q-1}, \frac{q-1}{\gcd(n, q-1) }\cdot\frac{\frac{q^w-1}{q-1}}{\gcd(n,\frac{q^w-1}{q-1})})=\frac{\frac{q^w-1}{q-1}}{\gcd(n,\frac{q^w-1}{q-1})}.\end{eqnarray*}
We have  $m_w=\frac{n}{\gcd(n,q^w-1)}=w^{v_w(n)-v_w(q^w-1)}\cdot\frac{n_{1}}{\gcd(n_{1},q-1)}\cdot \frac{n_{2}}{\gcd(n_{2},\frac{q^w-1}{q-1})}$, and set $$m_{w,1}=\frac{n_{1}}{\gcd(n_1,q-1)}.$$
Hence, $x^t-\delta^{ul_w}\in \Bbb F_q[x]$ is irreducible  if and only if  $\frac{q^w-1}{q-1
}| ul_w$  if and only if $\gcd(n, \frac{q^w-1}{q-1})| u$.
Similarly, we have the irreducible factorization of $x^n-1$ over $\Bbb F_q$ in (3.2) and
the total number of irreducible factors of $x^n-1$ in $\Bbb F_q[x]$.

If  $v_w(n)< v_w(q^w-1)$,  by Lemma 3.1, $\gcd(n, q^w-1)=\gcd( n/w, q-1)\gcd(n, \frac{q^w-1}{q-1})$
 and
\begin{eqnarray*}\gcd(\frac{q^w-1}{q-1}, l_w)&=&\gcd(\frac{q^w-1}{q-1}, \frac{q-1}{\gcd(n/w, q-1) }\cdot\frac{\frac{q^w-1}{q-1}}{\gcd(n,\frac{q^w-1}{q-1})})=\frac{w\cdot\frac{q^w-1}{q-1}}{\gcd(n,\frac{q^w-1}{q-1})}.\end{eqnarray*}
We have  $m_w=\frac{n}{\gcd(n,q^w-1)}=\frac{n_{1}}{\gcd(n_{1},q-1)}\cdot \frac{n_{2}}{\gcd(n_{2},\frac{q^w-1}{q-1})}$ and set $$m_{w,1}=\frac{n_{1}}{\gcd(n_1,q-1)}.$$

Hence, $x^t-\delta^{ul_w}\in \Bbb F_q[x]$ is irreducible  if and only if  $\frac{q^w-1}{q-1
}| ul_w$  if and only if $\frac{1}{w}\cdot\gcd(n, \frac{q^w-1}{q-1})| u$. Note that $\gcd(n, q^w-1)=\gcd(n, q-1)\cdot\frac 1w\cdot \gcd(n,\frac{q^w-1}{q-1})$.
Similarly, we have the irreducible factorization of $x^n-1$ over $\Bbb F_q$ in (3.2) and
the total number of irreducible factors of $x^n-1$ in $\Bbb F_q[x]$.
\end{proof}

 \begin{exa} Suppose that $w=3$ and $n,q$ are positive integers satisfying the condition in Theorem $3.2$, i.e., $ord_{rad(n)}(q)=3$ and either $q\not\equiv3 \pmod 4$ or $8\nmid n$.
In Table $1$, we list irreducible factor numbers of $x^{n}-1$ over $\Bbb F_{q}$ for $q<10$ by Theorem $3.2$.
 \end{exa}
\[ \begin{tabular} {c} Table $1$. Total irreducible factor number of $x^{n}-1(k_{}\geqslant1)$\\
\begin{tabular}{|c|c|c|c|c|c|c|c|}
  \hline
 $q$ & $n$& $m_{w}$& Irreducible factors number \\
\hline
     $ 2$   &    $7^{k}$&$7^{k-1} $&$2k+1$\\
   \hline
   $3$  &    $2^{k_{1}}13^{k_{}}(0\leqslant k_{1}\leqslant1)$&$13^{k_{}-1} $ &$2^{k_{1}}(4k_{}+1)$\\
   \hline
    $3$  &    $2^{k_{1}}13^{k_{}}(k_{1}\geqslant2)$&$2^{k_{1}-1}13^{k_{}-1}$ &$(k_{1}+1)(4k_{}+1)$\\
   \hline
   $4$  &    $3^{k_{1}}7^{k_{}}(0\leqslant k_{1}\leqslant1)$&$7^{k_{}-1} $&$3^{k_{1}}(2k_{}+1)$\\
   \hline
   $4$  &    $3^{k_{1}}7^{k_{}}(k_{1}\geqslant2)$& $3^{k_{1}-2}7^{k_{}-1}$&$12k_{1}k_{}+2k_{1}-6k_{}+1$\\
   \hline
    $5$  &    $2^{k_{1}}31^{k_{}}(0\leqslant k_{1}\leqslant2)$&$31^{k_{}-1} $ &$2^{k_{1}}(10k_{}+1)$\\
   \hline
      $5$  &    $2^{k_{1}}31^{k_{}}(k_{1}\geqslant2)$&$2^{k_{1}-2}31^{k_{}-1} $&$2k_{1}(10k_{}+1)$\\
   \hline
     $7$  &  $2^{k_{1}}3^{k_{2}}19^{k_{}}(0\leqslant k_{1},k_{2}\leqslant1)$&$19^{k_{}-1} $ &$2^{k_{1}}3^{k_{2}}(6k_{}+1)$\\
    \hline
     $7$  &  $2^{k_{1}}3^{k_{2}}19^{k_{}}(0\leqslant k_{1}\leqslant1,k_{2}\geqslant2)$&$3^{k_{2}-2}19^{k_{}-1} $ &$2^{k_{1}}(36k_{2}k_{}+2k_{2}-18k_{}+1)$\\
    \hline
      $7$  &  $4\times3^{k_{2}}19^{k_{}}(0\leqslant k_{2}\leqslant1)$&$2\times19^{k_{}-1} $ &$3^{k_{2}+1}(6k_{}+1)$\\
    \hline
    $7$  &  $4\times3^{k_{2}}19^{k_{}}(k_{2}\geqslant2)$&$2\times3^{k_{2}-2}19^{k_{}-1} $ &$3(36k_{2}k_{}+2k_{2}-18k_{}+1)$\\
    \hline
    $8$  &  $7^{k_{1}}73^{k_{}}(0\leqslant k_{1}\leqslant1)$&$73^{k_{}-1} $ &$7^{k_{1}}(24k_{}+1)$\\
   \hline
   $8$  &    $7^{k_{1}}73^{k_{}}(k_{1}\geqslant1)$&$7^{k_{1}-1}73^{k_{}-1} $ &$(6k_{1}+1)(24k_{}+1)$\\
   \hline
   $9$ &    $2^{k_{1}}7^{k_{2}}13^{k_{}}(0\leqslant k_{1}\leqslant3,0\leqslant k_{2}\leqslant1)$&$13^{k_{}-1} $ &$\frac{2^{k_{1}}(7^{k_{2}}(12k_{}+1)+2)}{3}$\\
   \hline
     $9$ &    $2^{k_{1}}7^{k_{2}}13^{k_{}}(0\leqslant k_{1}\leqslant3,k_{2}\geqslant2)$&$7^{k_{2}-1}13^{k_{}-1} $ &$2^{k_{1}}(6k_{2}k_{}+2k_{2}+4k_{}+1)$\\
   \hline
        $9$ &    $2^{k_{1}}7^{k_{2}}13^{k_{}}(k_{1}\geqslant4,0\leqslant k_{2}\leqslant1)$&$2^{k_{1}-4}13^{k_{}-1} $ &$\frac{4(k_{1}-2)(7^{k_{2}}(12k_{}+1)+2)}{3}\cdot $\\
   \hline
     $9$ &    $2^{k_{1}}7^{k_{2}}13^{k_{}}(k_{1}\geqslant4,k_{2}\geqslant2)$&$2^{k_{1}-4}7^{k_{2}-1}13^{k_{}-1} $ &$4 (k_{1}-2)(24k_{2}k_{}+2k_{2}+4k_{}+1)$\\
   \hline
\end{tabular}
\end{tabular}
\]

Second, we consider the case: $q\equiv 3\pmod 4$ and $8|n$.  By $w$ an odd prime,  $q^{w} \equiv3 \pmod 4$  if and only if $q \equiv3 \pmod 4$.

Let $m_{2w}=\frac{n}{\gcd(n, q^{2w}-1)}$, $l_{2w}=\frac{q^{2w}-1}{\gcd(n, q^{2w}-1)}$, $l_w=\frac{q^w-1}{\gcd(n, q^w-1)}$, $r=\min\{v_2(n/2),v_2(q^w+1)\}=\min\{v_2(n/2),v_2(q+1)\}$,  $\pi$ a  generator of $\Bbb F_{q^{2w}}^*$, and   $\delta=\pi^{q^w+1}$.
Then the factorization of $x^{n} -1$ into irreducible factors in $\Bbb F_{q^w} [x ]$ is

\begin{equation}\prod_{\mbox{\tiny$
\begin{array}{c}
t|m_{2w}\\
t~odd\\
\end{array}$
}} \prod_{\mbox{\tiny$
\begin{array}{c}
1\leqslant v\leqslant \gcd(n,q^w-1)\\
\gcd(v,t)=1\\
\end{array}$
}}(x^{t}-\delta^{vl_{w}})\cdot\prod\limits_{t\mid m_{2w}} \prod\limits_{u\in \mathcal{R}_{t}} (x^{2t}-(\pi^{ul_{2w}}+\pi^{q^wul_{2w}})x^{t}+\delta^{ul_{2w}}), \end{equation} where
$$\mathcal{R}_t=\bigg\{u\in \Bbb N: \begin{array}{l}
1\le u\le \gcd(n,q^{2w}-1),\gcd(u,t)=1,
 \\ 2^{r}\nmid u, ~and~ u=min\{u, q^wu\}_{\gcd(n,q^{2w}-1)}\end{array}\bigg\}.$$
Note that  $\{a\}_{b}$ denotes the remainder of the division of $a$ by $b$.

  Next we will explicitly give the irreducible factorization of $x^{n}-1$ over $\Bbb F_{q}$.

\begin{thm}  Suppose that $ord_{rad(n)}(q)=w$, where $w$ is an odd prime, $q^{} \equiv 3\pmod 4$  and $8| n$. Let $n=w^{v_w(n)}n_1n_2$, $rad(n_1)|(q-1)$, $rad(n_2)|\frac{q^w-1}{q-1}$, $\gcd(w,n_1n_2)=1$.  Set $m_{2w}=\frac{n}{\gcd(n,q^{2w}-1)}$, $m_{w}=\frac{n}{\gcd(n,q^{w}-1)}$, $m_{w,1}=\frac{n_1}{\gcd(n_{1}, q-1)}$, $l_{2w}=\frac{q^{2w}-1}{\gcd(n, q^{2w}-1)}$, $l_{w}=\frac{q^{w}-1}{\gcd(n,q^{w}-1)}$, $l_2=\frac{q^2-1}{\gcd(n,q^2-1)}$,  $l_{1}=\frac{q^{}-1}{\gcd(n,q^{}-1)}$, $r=min\{v_{2}(\frac n 2),v_{2}(q^{w}+1)\}$, $\pi$ a  generator of $\Bbb F_{q^{2w}}^\ast$,  $\delta=\pi^{q^{w}+1}$, $\alpha=\pi^{\frac{q^{2w}-1}{q^2-1}}$, and $\theta=\delta^{\frac{q^w-1}{q-1}}$. Then

$(1)$ The irreducible factorization of $x^{n}-1$ over $\Bbb F_{q}$ is as follows:
\begin{equation*}
\begin{split}
&\prod_{\mbox{\tiny$
\begin{array}{c}
t| m_{w,1}\\ t~ odd
\end{array}$
}}
 \prod_{\mbox{\tiny$
\begin{array}{c}
1\leqslant v' \leqslant \gcd(n,q-1) \\
\gcd(t,v')=1
\end{array}$
}}(x^{t}-\theta^{v'l_{1}})\cdot\prod_{\mbox{\tiny$
\begin{array}{c}
t|m_{2w} \\
v\in\mathcal{S}_{t}
\end{array}$
}}\prod\limits_{k=0}^{w-1} (x^{t}-\delta^{vq^kl_{w}})\\\\
& \cdot\prod_{\mbox{\tiny$
\begin{array}{c}
t| m_{w,1}\\u'\in \mathcal{R}^{(1)}_t
\end{array}$
}}
(x^{2t}-(\alpha^{u'l_{2}}+\alpha^{u'l_{2}})x^{t}+\theta^{u'l_{2}}) \cdot\prod_{\mbox{\tiny$
\begin{array}{c}
t| m_{2w}\\
u\in\mathcal{ R}^{(2)}_{t}
\end{array}$
}}\prod\limits_{k=0}^{2w-1}(x^t-\pi^{uq^kl_{2w}}),
\end{split}
\end{equation*}
where $$\mathcal{S}_t=\bigg\{v\in \Bbb N: \begin{array}{l}
1\le v\le \gcd(n,q^{w}-1),\gcd(v,t)=1,
 \\ \gcd(n, \frac{q^{w}-1}{q-1})\nmid v, v=\min\{v, qv, \cdots, q^{w-1}v\}_{\gcd(n, q^w-1)}\end{array}\bigg\},$$
$$\mathcal{R}^{(1)}_t=\bigg\{u'\in \Bbb N: \begin{array}{l}
1\le u'\le 2^{r}\gcd(n,q^{}-1),\gcd(u',t)=1,
 \\ 2^r\nmid u', u'=\min\{u',u'q^w\}_{\gcd(n, q^2-1)}\end{array}\bigg\},$$ and
$$\mathcal{R}_t^{(2)}=\bigg\{u\in \Bbb N: \begin{array}{l}
1\le u\le \gcd(n,q^{2w}-1),\gcd(u,t)=1,2^{r}\nmid u,
 \\  \gcd(n, \frac{q^w-1}{q-1})\nmid u,  ~and~ u=\min\{u, uq,\cdots,  q^{2w-1}u\}_{\gcd(n,q^{2w}-1)}\end{array}\bigg\}.$$


$(2)$ For each odd $t$ and $t|m_{w,1}$, the numbers of irreducible polynomials of degree $t$, $2t$,  $wt$,  $2wt$ over  $\Bbb F_{q}$ are $\frac{\varphi(t)}{t}\cdot\gcd(n,q-1)$, $\frac{\varphi(t)}{2t}\cdot(2^{r_{}}-1)\cdot\gcd(n,q-1)$, $\frac{\varphi(t)}{wt}\cdot(\gcd(n,q^{w}-1)-\gcd(n,q-1))$ and $\frac{\varphi(t)}{2wt}\cdot(2^{r_{}}-1)\cdot(\gcd(n,q^{w}-1)-\gcd(n,q-1))$, respectively.
For each odd $t$ and $t\nmid m_{w,1}$, the numbers of irreducible polynomials of degree  $wt$,  $2wt$ over  $\Bbb F_{q}$ are $\frac{\varphi(t)}{wt}\cdot\gcd(n,q^{w}-1)$ and $\frac{\varphi(t)}{2wt}\cdot(2^{r_{}}-1)\cdot\gcd(n,q^{w}-1)$, respectively.
For each even $t$  and $t| m_{w,1}$, the numbers of irreducible polynomials of degree  $2t$,  $2wt$ over  $\Bbb F_{q}$ are $\frac{\varphi(t)}{2t}\cdot2^{r_{}}\cdot\gcd(n,q-1)$ and $\frac{\varphi(t)}{2wt}\cdot2^{r_{}}\cdot(\gcd(n,q^{w}-1)-\gcd(n,q-1))$, respectively.
For each even $t$ and $t\nmid m_{w,1}$, the numbers of irreducible polynomials of degree   $2wt$ over  $\Bbb F_{q}$ is $\frac{\varphi(t)}{2wt}\cdot2^{r_{}}\cdot\gcd(n,q^{w}-1)$.
The total number of irreducible factors of $x^n-1$ in $\Bbb F_q[x]$ is

\begin{eqnarray*}
&&\prod_{\mbox{\tiny$
\begin{array}{c}
p|m_{2w}\\
p~prime
\end{array}$
}}
(1+v_p(m_{2w})\frac {p-1}p)\cdot \frac{2^{r_{}-1}}{w}\cdot\gcd(n,q^{w}-1) \\&+&\prod_{\mbox{\tiny$
\begin{array}{c}
p|m_{2w}\\
p~odd ~prime
\end{array}$
}}
(1+v_p(m_{2w})\frac {p-1}p)\cdot \frac{1}{2w}\cdot\gcd(n,q^{w}-1)\\
&+&\prod_{\mbox{\tiny$
\begin{array}{c}
p|m_{w,1}\\
p~prime
\end{array}$
}}(1+v_p(m_{2w})\frac {p-1}p)\cdot\frac{2^{r_{}-1}(w-1)}{w}\cdot \gcd(n,q-1)\\
&+&\prod_{\mbox{\tiny$
\begin{array}{c}
p|m_{w,1}\\
p~odd~prime
\end{array}$
}}(1+v_p(m_{2w})\frac {p-1}p)\cdot\frac{w-1}{2w}\cdot \gcd(n, q-1).\end{eqnarray*}
\end{thm}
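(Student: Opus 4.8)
The plan is to descend the known $\Bbb F_{q^w}$-factorization (3.3) to $\Bbb F_q$, paralleling the strategy of Theorem 3.2 but now starting from a factorization containing both binomials and trinomials. First I would check that Lemma 2.3 (MVO's Corollary 2) legitimately applies over the base field $\Bbb F_{q^w}$: since $w$ is odd, $q^w \equiv 3 \pmod 4$ iff $q \equiv 3 \pmod 4$, and $q^w + 1 = (q+1)(q^{w-1} - q^{w-2} + \cdots + 1)$ with the second factor a sum of $w$ odd terms, hence odd, so $v_2(q^w + 1) = v_2(q+1)$ and the parameter $r$ is genuinely unchanged. This produces (3.3). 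I would then inspect each factor over $\Bbb F_{q^w}$ and decide, via Lemma 2.4 together with Serret's criterion (Lemma 2.1), whether it already lies in $\Bbb F_q[x]$ (and so remains irreducible over $\Bbb F_q$) or must be grouped with its Frobenius conjugates.

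For the binomial part (odd $t \mid m_{2w}$) the analysis is identical to Theorem 3.2: $x^t - \delta^{vl_w}$ lies in $\Bbb F_q[x]$ iff $\delta^{vl_w} \in \Bbb F_q$, that is iff $\tfrac{q^w-1}{q-1} \mid vl_w$, which by the gcd computations of Lemma 3.1 reduces to a divisibility condition on $v$. This yields the irreducible binomials $x^t - \theta^{v'l_1}$ (first product) together with the degree-$wt$ factors $\prod_{k=0}^{w-1}(x^t - \delta^{vq^kl_w})$ formed by combining $w$ conjugates when $\delta^{vl_w} \notin \Bbb F_q$ (second product, $v \in \mathcal{S}_t$).

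The heart of the proof is the trinomial part. Each trinomial $x^{2t} - (\pi^{ul_{2w}} + \pi^{q^{w}ul_{2w}})x^t + \delta^{ul_{2w}}$ over $\Bbb F_{q^w}$ arises as the product of the $\Bbb F_{q^{2w}}$-binomial $x^t - \pi^{ul_{2w}}$ with its $\Bbb F_{q^{2w}}/\Bbb F_{q^w}$-conjugate (note $\pi^{ul_{2w}}\pi^{q^{w}ul_{2w}} = \delta^{ul_{2w}}$ since $\delta = \pi^{q^w+1}$), so I would work directly with the root $\pi^{ul_{2w}}$ and the Galois group of $\Bbb F_{q^{2w}}/\Bbb F_q$. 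Because $w$ is odd this group is cyclic of order $2w$ with $\gcd(2,w)=1$, generated by $x \mapsto x^q$, and the conjugation pairing binomials into trinomials is its unique involution $x \mapsto x^{q^w}$. The key dichotomy is whether the $\Bbb F_q$-orbit of $\pi^{ul_{2w}}$ already closes after length $2$ (equivalently, the trinomial coefficients lie in $\Bbb F_q$, which I would translate into a divisibility condition involving $\tfrac{q^w-1}{q-1}$) or genuinely has length $2w$. The first case produces the degree-$2t$ trinomials $x^{2t} - (\alpha^{u'l_2} + \alpha^{qu'l_2})x^t + \theta^{u'l_2}$ with roots in $\Bbb F_{q^2}$ ($u' \in \mathcal{R}_t^{(1)}$); the second case produces the degree-$2wt$ products $\prod_{k=0}^{2w-1}(x^t - \pi^{uq^kl_{2w}})$ ($u \in \mathcal{R}_t^{(2)}$). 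Pinning down the exact membership conditions defining $\mathcal{R}_t^{(1)}$ and $\mathcal{R}_t^{(2)}$ — in particular the extra constraint $\gcd(n, \tfrac{q^w-1}{q-1}) \nmid u$ that separates the ``descends to $\Bbb F_{q^2}$'' case from the ``needs all $2w$ conjugates'' case, while retaining the $2^r \nmid u$ condition inherited from Lemma 2.3 — is the main obstacle, and it will require the same careful case split on $v_w(n)$ versus $v_w(q^w-1)$ that appears in part $(1)$.

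Finally, the counting in $(2)$ follows by the same bookkeeping as in Theorem 3.2: for each divisor $t$ the number of $\Bbb F_{q^w}$-factors of each type is read off from Lemma 2.3, those not defined over $\Bbb F_q$ coalesce in groups of $w$ (both for the binomials giving degree $wt$ and for the trinomials giving degree $2wt$), and the remaining trinomials descend to degree $2t$. Summing $\tfrac{\varphi(t)}{t}$ over the relevant divisors, using multiplicativity of $\varphi$ and the identity $\sum_{d \mid p^k}\tfrac{\varphi(d)}{d} = 1 + k\tfrac{p-1}{p}$, then yields the stated four-term total after separating the contributions with $r$ (the trinomial count) from the purely binomial count, and the odd-$p$ products from the full products.
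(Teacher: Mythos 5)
Your proposal follows essentially the same route as the paper: descend the $\Bbb F_{q^w}$-factorization (3.3) via Lemma 2.4, decide membership in $\Bbb F_q[x]$ for the binomials through $\frac{q^w-1}{q-1}\mid vl_w$ and for the trinomials through $\pi^{ul_{2w}}\in\Bbb F_{q^2}$, i.e.\ $\frac{q^{2w}-1}{q^2-1}\mid ul_{2w}$, reduce both to gcd divisibility conditions via Lemma 3.1 with the case split on $\gcd(n,q-1,\frac{q^w-1}{q-1})$ and on $v_w(n)$ versus $v_w(q^w-1)$, and count using $\sum_{d\mid p^k}\varphi(d)/d=1+k\frac{p-1}{p}$. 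The steps you flag as remaining work (the exact defining conditions of $\mathcal{R}_t^{(1)}$ and $\mathcal{R}_t^{(2)}$) are precisely what the paper carries out with the gcd computations $\gcd(\frac{q^{2w}-1}{q^2-1},l_{2w})=\frac{q^w+1}{q+1}\gcd(\frac{q^w-1}{q-1},l_w)$, so your plan is correct and matches the paper's proof in structure and substance.
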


\begin{proof}

 Since $rad(n)|(q^{w}-1)$ and $w$ is odd,  we have $\gcd(n/2, q^{w}+1)=\gcd(n/2, q+1)=2^{r}$, where $r=\min\{v_2(n/2), v_2(q^w+1)\}=\min\{v_2(n/2),v_2(q+1)\}$,  $$m_{2w}=\frac{n}{\gcd(n,q^{2w}-1)}=\frac{n}{\gcd(n/2,q^{w}+1)\gcd(n,q^{w}-1)}=\frac{m_{w}}{2^{r}}$$ and $$l_{2w}=\frac{q^{2w}-1}{\gcd(n,q^{2w}-1)}=\frac{q^{w}-1}{\gcd(n,q^{w}-1)}\cdot \frac{q^{w}+1}{\gcd(n/2,q^{w}+1)}=\frac{q^{w}+1}{2^{r}}l_{w}.$$

 Now  we consider two cases.

 {\bf Case 1.} Suppose that  $\gcd(n,q-1, \frac{q^w-1}{q-1})=1$.
 Let $n=n_{1} n_{2}, rad(n_{1})|(q-1), rad(n_{2})|\frac{q^{w}-1}{q-1}$ and $\gcd(n_{1},n_{2})=1$. By $\gcd(n,q^{w}-1)=\gcd(n,q-1)\gcd(n,\frac {q^w-1}{q-1})$,
 we have $$m_{w}=\frac{n}{\gcd(n,q^{w}-1)}=\frac{n_{1}}{\gcd(n_{1},q^{}-1)}\cdot\frac{n_{2}}{\gcd(n_{2},\frac{q^{w}-1}{q-1})}$$
and set $$m_{w,1}=\frac{n_{1}}{\gcd(n_{1},q-1)}.$$

Firstly, we investigate the product in (3.3):
$$\prod_{\mbox{\tiny$
\begin{array}{c}
t| m_{2w}\\
t~odd\\
\end{array}$
}} \prod_{\mbox{\tiny$
\begin{array}{c}
1\leqslant v\leqslant \gcd(n,q^w-1)\\
\gcd(v,t)=1\\
\end{array}$
}}(x^{t}-\delta^{vl_{w}}).$$
We determine
 when  these binomial polynomials $x^t-\delta^{vl_w}\in \Bbb F_q[x]$ are irreducible.

Note that
$$\gcd(\frac{q^w-1}{q-1}, l_w)=\gcd(\frac {q^w-1}{q-1}, \frac {q-1}{\gcd(n_, q-1)}\cdot\frac{\frac{q^w-1}{q-1}}{\gcd(n_, \frac{q^w-1}{q-1})})=\frac{\frac{q^w-1}{q-1}}{\gcd(n_, \frac{q^w-1}{q-1})}.$$
For $t|m_{2w}$, $t$ odd, $1\le v\le \gcd(n, q^w-1)$ and $\gcd(t, v)=1$,
 binomial polynomial $x^t-\delta^{vl_w}\in \Bbb F_q[x]$ is irreducible if and only if $\gcd(n_, \frac{q^w-1}{q-1})| v$.

If $t| m_{w,1}$ and $t$ is odd, we have $x^t-\delta^{vl_w}\in\Bbb F_q[x]$ is also irreducible if and only if $v=\gcd(n_{2}, \frac{q^w-1}{q-1})v'$ for $1\le v'\le \gcd(n, q-1)$ and $\gcd(t,v')=1$. Hence there exist $\frac{\varphi(t)}t\cdot\gcd(n, q-1)$ binomial polynomials of degree $t$ in $\Bbb F_{q^w}[x]$ that are in $\Bbb F_q[x]$.
On the other hand, for each binomial $x^t-\delta^{vl_w}\in \Bbb F_{q^w}[x]$,  $\gcd(n, \frac{q^w-1}{q-1})\nmid v$,  there are $w$  conjugate binomials in $\Bbb F_{q^w}[x]$ such that their product generates an irreducible polynomial in $\Bbb F_q[x]$,  i.e. $\prod_{k=0}^{w-1}(x^t-\delta^{vq^kl_w})\in \Bbb F_q[x]$ is irreducible. Thus the number of irreducible polynomials of degree $wt$ in $\Bbb F_q[x]$ is
$\frac {\varphi(t)}{wt}\cdot( \gcd(n,q^w-1)- \gcd(n, q-1)).$

If $t\nmid m_{w,1}$ and $t$ is odd, we have
there is not any binomial polynomial of degree $t$ in $\Bbb F_{q^w}[x]$ that is also in $\Bbb F_q[x]$. Hence the number of irreducible polynomials of degree $wt$ in $\Bbb F_q[x]$ is
$\frac {\varphi(t)}{wt}\cdot \gcd(n,q^w-1).$

Secondly, we investigate the product in (3.3):
\begin{equation*}\prod\limits_{t\mid m_{2w}} \prod\limits_{u\in \mathcal{R}_{t}} (x^{2t}-(\pi^{ul_{2w}}+\pi^{q^wul_{2w}})x^{t}+\delta^{ul_{2w}}). \end{equation*}
We determine
when  these polynomials $x^{2t}-(\pi^{ul_{2w}}+\pi^{q^{w}ul_{2w}})x^{t}+\delta^{ul_{2w}}\in \Bbb F_{q}[x]$ are irreducible.

Note that
 $$\gcd(\frac{q^{2w}-1}{q^2-1}, l_{2w})=\frac{q^w+1}{q+1}\gcd(\frac{q^w-1}{q-1}, l_{w})=\frac{\frac{q^{2w}-1}{q^2-1}}{\gcd(n, \frac{q^w-1}{q-1})}. $$

Then irreducible polynomials   $x^{2t}-(\pi^{ul_{2w}}+\pi^{q^{w}ul_{2w}})x^{t}+\delta^{ul_{2w}}=(x^t-\pi^{ul_{2w}})(x^t-\pi^{q^wul_{2w}})\in \Bbb F_{q}[x]$ if and only if $x^t-\pi^{ul_{2w}}\in \Bbb F_{q^2}[x]$ if and only if $\frac{q^{2w}-1}{q^2-1}|ul_{2w}$ if and only if $\gcd(n, \frac{q^w-1}{q-1})|u$ if and only if $u=\gcd(n_{}, \frac{q^w-1}{q-1})u'$ for $1\le u'\le 2^{r}\gcd(n, q-1)$, $\gcd(t,u')=1$ and $2^r\nmid u'$, i.e.  $x^{2t}-(\pi^{ul_{2w}}+\pi^{q^{w}ul_{2w}})x^{t}+\delta^{ul_{2w}}=x^{2t}-(\alpha^{u'l_2}+\alpha^{u'ql_2})x^t+\theta^{u'l_2}\in \Bbb F_q[x]$ is irreducible.

 If $t|m_{w, 1}$ and $t$ is odd,  there exist $\frac1 2\cdot\frac{\varphi(t)}{t}\cdot(2^{r_{}}-1)\cdot\gcd(n,q-1)=\frac{\varphi(t)}{2t}\cdot(2^{r_{}}-1)\cdot\gcd(n,q-1)$ irreducible polynomials of degree $2t$ in $\Bbb F_{q^w}[x]$ that are in $\Bbb F_q[x]$.
  On the other hand, each irreducible polynomial  $x^{2t}-(\pi^{ul_{2w}}+\pi^{q^{w}ul_{2w}})x^{t}+\delta^{ul_{2w}}\in \Bbb F_{q^w}[x]$, $\gcd(n, \frac{q^{w}-1}{q-1})\nmid u$, there are $w$  conjugate irreducible polynomials  in $\Bbb F_{q^w}[x]$ such that their product generates an irreducible polynomial in $\Bbb F_q[x]$, i.e. $\prod_{k=0}^{2w-1}(x^t-\pi^{q^kul_{2w}})\in \Bbb F_q[x]$ is irreducible. From Lemma 2.3, the number of irreducible polynomials of degree $2t$ over $\Bbb F_{q^w}$ in (3.3) is $\frac{\varphi(t)}{2t}\cdot(2^{r_{}}-1)\cdot\gcd(n,q^{w}-1)$. Hence
   the number of irreducible polynomials of degree $2wt$ over  $\Bbb F_{q}$ is $\frac{\varphi(t)}{2wt}\cdot(2^{r_{}}-1)\cdot(\gcd(n,q^{w}-1)-\gcd(n,q-1))$.

  If $t|m_{w, 1}$ and $t$ is even,  we have that $u'$ must be an odd integer. Hence, the number of irreducible polynomials of degree $2t$ and $2wt$ over  $\Bbb F_{q}$ are $\frac{\varphi(t)}{2t}\cdot2^{r_{}}\cdot\gcd(n,q-1)$ and $\frac{\varphi(t)}{2wt}\cdot2^{r_{}}\cdot(\gcd(n,q^{w}-1)-\gcd(n,q-1))$, respectively.

If $t\nmid m_{w,1}$ and $t$ is odd,
there is not any binomial polynomial of degree $2t$ in $\Bbb F_{q^w}[x]$ that is also in $\Bbb F_q[x]$. Hence the number of irreducible polynomials of degree $2wt$ in $\Bbb F_q[x]$ is
$\frac{\varphi(t)}{2wt}\cdot(2^{r_{}}-1)\cdot\gcd(n,q^{w}-1)$.

If $t\nmid m_{w,1}$ and $t$ is even,
there is not any binomial polynomial of degree $2t$ in $\Bbb F_{q^w}[x]$ that is also in $\Bbb F_q[x]$. From Lemma 2.3, the number of irreducible polynomials of degree $2t$ over $\Bbb F_{q^w}$ in (3.3) is $\frac{\varphi(t)}{2t}\cdot2^{r}\cdot\gcd(n,q^{w}-1)$. Hence the number of irreducible polynomials of degree $2wt$ in $\Bbb F_q[x]$ is
$\frac{\varphi(t)}{2wt}\cdot2^{r}\cdot\gcd(n,q^{w}-1)$.

 Hence we have the irreducible factorization of $x^n-1$ over $\Bbb F_q$ in  (1). As each $t$ divides $m_{2w}$ here, the total number of irreducible factors of $x^n-1$ in $\Bbb F_q[x]$ is
\begin{eqnarray*}&&\sum_{\mbox{\tiny$
\begin{array}{c}
t|m_{w,1}\\
t~odd
\end{array}$
}}
\frac{\varphi(t)}{t}\cdot \frac{2^{r_{}}+1}{2w}\cdot(\gcd(n, q^w-1)+(w-1)\gcd(n, q-1))\\
&+&\sum_{\mbox{\tiny$
\begin{array}{c}
t\nmid m_{w,1}\\
t~odd\\
\end{array}$
}}
\frac{\varphi(t)}{t}\cdot \frac{2^{r_{}}+1}{2w}\cdot\gcd(n, q^w-1)\\
&+&\sum_{\mbox{\tiny$
\begin{array}{c}
t| m_{w,1}\\
t~even\\
\end{array}$
}}
\frac{\varphi(t)}{t}\cdot \frac{2^{r_{}}}{2w}\cdot(\gcd(n, q^w-1)+(w-1)\gcd(n,q-1))\\
&+&\sum_{\mbox{\tiny$
\begin{array}{c}
t\nmid m_{w,1}\\
t~even\\
\end{array}$
}}
\frac{\varphi(t)}{t}\cdot\frac{2^{r_{}}}{2w}\cdot\gcd(n,q^{w}-1)\\
&=&\sum_{\mbox{\tiny$
\begin{array}{c}
t |m_{2w}\\
\end{array}$
}}
\frac{\varphi(t)}{t}\cdot\frac{2^{r_{}}}{2w}\cdot\gcd(n,q^{w}-1)
+\sum_{\mbox{\tiny$
\begin{array}{c}
t |m_{2w}\\
t~odd
\end{array}$
}}
\frac{\varphi(t)}{t}\cdot\frac{1}{2w}\cdot\gcd(n,q^{w}-1)\\
&+&\sum_{\mbox{\tiny$
\begin{array}{c}
t| m_{w,1}\\
\end{array}$
}}\frac{\varphi(t)}{t}\cdot\frac{2^{r_{}}(w-1)}{2w}\cdot \gcd(n,q-1)+\sum_{\mbox{\tiny$
\begin{array}{c}
t |m_{w,1}\\
t~odd
\end{array}$
}}\frac{\varphi(t)}{t}\cdot\frac{w-1}{2w}\cdot \gcd(n, q-1) \\
&=&\prod_{\mbox{\tiny$
\begin{array}{c}
p|m_{2w}\\
p~prime
\end{array}$
}}
(1+v_p(m_{2w})\frac {p-1}p)\cdot \frac{2^{r_{}}}{2w}\cdot\gcd(n,q^{w}-1) \\&+&\prod_{\mbox{\tiny$
\begin{array}{c}
p|m_{2w}\\
p~odd ~prime
\end{array}$
}}
(1+v_p(m_{2w})\frac {p-1}p)\cdot \frac{1}{2w}\cdot\gcd(n,q^{w}-1)\\
&+&\prod_{\mbox{\tiny$
\begin{array}{c}
p|m_{w,1}\\
p~prime
\end{array}$
}}(1+v_p(m_{2w})\frac {p-1}p)\cdot\frac{2^{r_{}}(w-1)}{2w}\cdot \gcd(n,q-1)\\
&+&\prod_{\mbox{\tiny$
\begin{array}{c}
p|m_{w,1}\\
p~odd~prime
\end{array}$
}}(1+v_p(m_{2w})\frac {p-1}p)\cdot\frac{w-1}{2w}\cdot \gcd(n, q-1).
\end{eqnarray*}

 {\bf Case 2.} Suppose that  $\gcd(n,q-1, \frac{q^w-1}{q-1})=w$.

 Let $n=w^{v_{w}(n)}n_{1}n_{2}, rad(n_{1})|(q-1), rad(n_{2})|\frac{q^{w}-1}{q-1}$ and $\gcd(w,n_{1}n_{2})=1$.

If $v_{w}(n)\geqslant v_{w}(q^{w}-1)$, by Lemma 3.1, $\gcd(n,q^{w}-1)=\gcd(n,q-1)\gcd(n,\frac{q^{w}-1}{q-1})$.
Then we have  $$m_{w}=\frac{n}{\gcd(n,q^{w}-1)}=w^{v_{w}(n)-v_{w}(q^{w}-1)}\cdot\frac{n_{1}}{\gcd(n_{1},q^{}-1)}\cdot\frac{n_{2}}{\gcd(n_{2},\frac{q^{w}-1}{q-1})}$$
and set $$m_{w,1}=\frac{n_{1}}{\gcd(n_{1},q-1)}.$$

Similarly, each binomial polynomial $x^t-\delta^{vl_w}\in \Bbb F_q[x]$ is irreducible if and only if $v=\gcd(n_{}, \frac{q^w-1}{q-1})v'$ for $1\le v'\le \gcd(n, q-1)$ and $\gcd(t,v')=1$, and each polynomial $x^{2t}-(\pi^{ul_{2w}}+\pi^{q^{w}ul_{2w}})x^{t}+\delta^{ul_{2w}}\in \Bbb F_{q}[x]$ is irreducible if and only if $u=\gcd(n_{}, \frac{q^w-1}{q-1})u'$ for $1\le u'\le 2^{r}\gcd(n, q-1)$, $\gcd(t,u')=1$ and $2^r\nmid u'$. Hence we have the following states.

If $t| m_{w,1}$ and $t$ is odd,  there exist $\frac{\varphi(t)}t\cdot\gcd(n, q-1)$ binomial polynomials of degree $t$ in $\Bbb F_{q^w}[x]$ that are in $\Bbb F_q[x]$.
On the other hand, for each binomial $x^t-\delta^{vl_w}\in \Bbb F_{q^w}[x]$,  $\gcd(n, \frac{q^w-1}{q-1})\nmid v$,  there are $w$  conjugate binomials in $\Bbb F_{q^w}[x]$ such that their product generates an irreducible polynomial in $\Bbb F_q[x]$. Thus the number of irreducible polynomials of degree $wt$ in $\Bbb F_q[x]$ is
$\frac {\varphi(t)}{wt}\cdot( \gcd(n,q^w-1)- \gcd(n, q-1)).$  Also, there exist $\frac{\varphi(t)}{2t}\cdot(2^{r_{}}-1)\cdot\gcd(n,q-1)$ irreducible polynomials of degree $2t$ in $\Bbb F_{q^w}[x]$ that are in $\Bbb F_q[x]$.
  On the other hand, each irreducible polynomial  $x^{2t}-(\pi^{ul_{2w}}+\pi^{q^{w}ul_{2w}})x^{t}+\delta^{ul_{2w}}\in \Bbb F_{q^w}[x]$, $\gcd(n, \frac{q^{w}-1}{q-1})\nmid u$, there are $w$  conjugate irreducible polynomials  in $\Bbb F_{q^w}[x]$ such that their product generates an irreducible polynomial in $\Bbb F_q[x]$.  Hence
   the number of irreducible polynomials of degree $2wt$ over  $\Bbb F_{q}$ is $\frac{\varphi(t)}{2wt}\cdot(2^{r_{}}-1)\cdot(\gcd(n,q^{w}-1)-\gcd(n,q-1))$.

If $t\nmid m_{w,1}$ and $t$ is odd,
there is no any binomial polynomial of degree $t$ in $\Bbb F_{q^w}[x]$ that is also in $\Bbb F_q[x]$. Hence the number of irreducible polynomials of degree $wt$ in $\Bbb F_q[x]$ is
$\frac {\varphi(t)}{wt}\cdot \gcd(n,q^w-1).$ Also,
there is no any binomial polynomial of degree $2t$ in $\Bbb F_{q^w}[x]$ that is also in $\Bbb F_q[x]$. Hence the number of irreducible polynomials of degree $2wt$ in $\Bbb F_q[x]$ is
$\frac{\varphi(t)}{2wt}\cdot(2^{r_{}}-1)\cdot\gcd(n,q^{w}-1)$.

  If $t|m_{w, 1}$ and $t$ is even,  the number of irreducible polynomials of degree $2t$ and $2wt$ over  $\Bbb F_{q}$ are $\frac{\varphi(t)}{2t}\cdot2^{r_{}}\cdot\gcd(n,q-1)$ and $\frac{\varphi(t)}{2wt}\cdot2^{r_{}}\cdot(\gcd(n,q^{w}-1)-\gcd(n,q-1))$, respectively.

If $t\nmid m_{w,1}$ and $t$ is even,
there is no any binomial polynomial of degree $2t$ in $\Bbb F_{q^w}[x]$ that is also in $\Bbb F_q[x]$. Hence the number of irreducible polynomials of degree $2wt$ in $\Bbb F_q[x]$ is
$\frac{\varphi(t)}{2wt}\cdot2^{r_{}}\cdot\gcd(n,q^{w}-1)$.

Similarly, we have the irreducible factorization of $x^n-1$ over $\Bbb F_q$ in  (1) and  the total number of irreducible factors of $x^n-1$ in $\Bbb F_q[x]$.

If $v_{w}(n)< v_{w}(q^{w}-1)$, by Lemma 3.1, $\gcd(n,q^{w}-1)=\gcd(n/w,q-1)\gcd(n,\frac{q^{w}-1}{q-1})$.
Then we have  $$m_{w}=\frac{n}{\gcd(n,q^{w}-1)}=\frac{n_{1}}{\gcd(n_{1},q^{}-1)}\cdot\frac{n_{2}}{\gcd(n_{2},\frac{q^{w}-1}{q-1})}$$
and set $$m_{w,1}=\frac{n_{1}}{\gcd(n_{1},q-1)}.$$
For $t|m_{2w}$, $t$ odd, $1\le v\le \gcd(n, q^w-1)$ and $\gcd(t, v)=1$, we have that $x^t-\delta^{vl_w}\in \Bbb F_q[x]$ is irreducible  if and only if $\frac{\gcd(n, \frac{q^w-1}{q-1})}w\mid v$ as $$\gcd(\frac{q^w-1}{q-1}, l_w)=\gcd(\frac {q^w-1}{q-1}, \frac {q-1}{\gcd(n/w, q-1)}\cdot\frac{\frac{q^w-1}{q-1}}{\gcd(n, \frac{q^w-1}{q-1})})=\frac{w\cdot\frac{q^w-1}{q-1}}{\gcd(n, \frac{q^w-1}{q-1})}.$$ Then we have the following states.

If $t| m_{w,1}$ and $t$ is odd, we have $x^t-\delta^{vl_w}\in\Bbb F_q[x]$ is also irreducible if and only if $v=\frac{\gcd(n_{}, \frac{q^w-1}{q-1})}wv'$ for $1\le v'\le \gcd(n, q-1)$ and $\gcd(t,v')=1$. Hence there exist $\frac{\varphi(t)}t\cdot\gcd(n, q-1)$ binomial polynomials of degree $t$ in $\Bbb F_{q^w}[x]$ that are in $\Bbb F_q[x]$.
On the other hand, for each binomial $x^t-\delta^{vl_w}\in \Bbb F_{q^w}[x]$,  $\frac{\gcd(n_{}, \frac{q^w-1}{q-1})}w\nmid v$,  there are $w$  conjugate binomials in $\Bbb F_{q^w}[x]$ such that their product generates an irreducible polynomial in $\Bbb F_q[x]$,  i.e. $\prod_{k=0}^{w-1}(x^t-\delta^{vq^kl_w})\in \Bbb F_q[x]$ is irreducible. Thus the number of irreducible polynomials of degree $wt$ in $\Bbb F_q[x]$ is
$\frac {\varphi(t)}{wt}\cdot( \gcd(n,q^w-1)- \gcd(n, q-1)).$

If $t\nmid m_{w,1}$ and $t$ is odd,
there is no any binomial polynomial of degree $t$ in $\Bbb F_{q^w}[x]$ that is also in $\Bbb F_q[x]$. Hence the number of irreducible polynomials of degree $wt$ in $\Bbb F_q[x]$ is
$\frac {\varphi(t)}{wt}\cdot \gcd(n,q^w-1).$

Secondly, we investigate whether  each polynomial $x^{2t}-(\pi^{ul_{2w}}+\pi^{q^{w}ul_{2w}})x^{t}+\delta^{ul_{2w}}\in \Bbb F_{q}[x]$ is irreducible in (3.3).

Note that
 $$\gcd(\frac{q^{2w}-1}{q^2-1}, l_{2w})=\frac{q^w+1}{q+1}\gcd(\frac{q^w-1}{q-1}, l_{w})=\frac{w\cdot\frac{q^{2w}-1}{q^2-1}}{\gcd(n, \frac{q^w-1}{q-1})}. $$

Then irreducible polynomials   $x^{2t}-(\pi^{ul_{2w}}+\pi^{q^{w}ul_{2w}})x^{t}+\delta^{ul_{2w}}=(x^t-\pi^{ul_{2w}})(x^t-\pi^{q^wul_{2w}})\in \Bbb F_{q}[x]$ if and only if $x^t-\pi^{ul_{2w}}\in \Bbb F_{q^2}[x]$ if and only if $\frac{q^{2w}-1}{q^2-1}|ul_{2w}$ if and only if $\frac{\gcd(n_{}, \frac{q^w-1}{q-1})}w|u$ if and only if $u=\frac{\gcd(n_{}, \frac{q^w-1}{q-1})}wu'$ for $1\le u'\le 2^{r}\gcd(n, q-1)$, $\gcd(t,u')=1$ and $2^r\nmid u'$, i.e.  $x^{2t}-(\pi^{ul_{2w}}+\pi^{q^{w}ul_{2w}})x^{t}+\delta^{ul_{2w}}=x^{2t}-(\alpha^{u'l_2}+\alpha^{u'ql_2})x^t+\theta^{u'l_2}\in \Bbb F_q[x]$ is irreducible.

 If $t|m_{w, 1}$ and $t$ is odd,  there exist $\frac{\varphi(t)}{2t}\cdot(2^{r_{}}-1)\cdot\gcd(n,q-1)$ irreducible polynomials of degree $2t$ in $\Bbb F_{q^w}[x]$ that are in $\Bbb F_q[x]$.
  On the other hand, each irreducible polynomial  $x^{2t}-(\pi^{ul_{2w}}+\pi^{q^{w}ul_{2w}})x^{t}+\delta^{ul_{2w}}\in \Bbb F_{q^w}[x]$, $\gcd(n, \frac{q^{w}-1}{q-1})\nmid u$, there are $w$  conjugate irreducible polynomials  in $\Bbb F_{q^w}[x]$ such that their product generates an irreducible polynomial in $\Bbb F_q[x]$. Hence
   the number of irreducible polynomials of degree $2wt$ over  $\Bbb F_{q}$ is $\frac{\varphi(t)}{2wt}\cdot(2^{r_{}}-1)\cdot(\gcd(n,q^{w}-1)-\gcd(n,q-1))$.

  If $t|m_{w, 1}$ and $t$ is even,   the number of irreducible polynomials of degree $2t$ and $2wt$ over  $\Bbb F_{q}$ are $\frac{\varphi(t)}{2t}\cdot2^{r_{}}\cdot\gcd(n,q-1)$ and $\frac{\varphi(t)}{2wt}\cdot2^{r_{}}\cdot(\gcd(n,q^{w}-1)-\gcd(n,q-1))$, respectively.

If $t\nmid m_{w,1}$ and $t$ is odd,
there is no any binomial polynomial of degree $2t$ in $\Bbb F_{q^w}[x]$ that is also in $\Bbb F_q[x]$. Hence the number of irreducible polynomials of degree $2wt$ in $\Bbb F_q[x]$ is
$\frac{\varphi(t)}{2wt}\cdot(2^{r_{}}-1)\cdot\gcd(n,q^{w}-1)$.

If $t\nmid m_{w,1}$ and $t$ is even,
there is no any binomial polynomial of degree $2t$ in $\Bbb F_{q^w}[x]$ that is also in $\Bbb F_q[x]$. In this case, by Lemma 2.3, the number of irreducible polynomials of degree $2t$ over $\Bbb F_{q^w}$ in (3.3) is $\frac{\varphi(t)}{2t}\cdot2^{r_{}}\cdot\gcd(n,q^{w}-1)$. Hence the number of irreducible polynomials of degree $2wt$ in $\Bbb F_q[x]$ is
$\frac{\varphi(t)}{2wt}\cdot2^{r_{}}\cdot\gcd(n,q^{w}-1)$.

 Hence we have the irreducible factorization of $x^n-1$ over $\Bbb F_q$ in (1) and the total number of irreducible factors of $x^n-1$ in $\Bbb F_q[x]$.
 \end{proof}

\begin{exa} Suppose that $q=w=3$ and $n=2^{k_{1}}13^{k_{2}}$ where $k_{1}, k_{2}$ are positive integers with $k_{1}\geqslant3, k_{2}\geqslant1$.  It is easy to check that these positive integers satisfying the condition in Theorem $3.4$, i.e., $ord_{rad(n)}(q)=3$ and  $q\equiv3 \pmod 4$ and $8| n$.
By directly calculation, we have $m_{2w}=2^{k_{1}-3}13^{k_{2}-1}$ and $m_{w,1}=2^{k_{1}-3}$.  Hence the total irreducible factors number of $x^{n}-1$ over $\Bbb F_{3}$ is
$8k_{1}k_{2}+2k_{1}-4k_{2}-1$ by Theorem $3.4$. For example, from the formula above, the total irreducible factors number of $x^{104}-1$ over $\Bbb F_{3}$ is $8\times3+2\times3-4-1=25$.


\end{exa}

\subsection{$w=2$.}

\begin{thm}  Assume that  $rad(n)|(q^{2}-1)$ and $rad(n)\nmid (q-1)$. Let $n=2^{v_{2}(n)} n_{1} n_{2}$, $rad(n_{1})|(q-1),  rad(n_{2})|(q+1), \gcd(n_{1},n_{2})=1$.
Set $m_{2}=\frac{n}{\gcd(n,q^{2}-1)}$,  $m_{2,1}=\frac{n_{1}}{\gcd(n_{1},q^{}-1)}$,  $l_{2}=\frac{q^{2}-1}{\gcd(n,q^{2}-1)}$, $l_{1}=\frac{q^{}-1}{\gcd(n,q^{}-1)}$, $\alpha$ a generator of $\Bbb F^{\ast}_{q^{2}}$ satisfying $\alpha^{q+1}=\theta$.
Then

$(1)$ The irreducible factorization of $x^{n}-1$   over $\Bbb F_{q}$ is
\begin{equation}
 \prod_{\mbox{\tiny$
\begin{array}{c}
t|m_{2,1}\\
\end{array}$
}}
 \prod_{\mbox{\tiny$
\begin{array}{c}
1\leqslant u'\leqslant \gcd(n,q-1)\\
\gcd(u',t)=1
\end{array}$
}}(x^{t}-\theta^{u'l_{1}})\cdot\prod_{\mbox{\tiny$
\begin{array}{c}
t| m_{2}\\u\in \mathcal{R}_t\\
\end{array}$
}} (x^{2t}-(\alpha^{ul_{2}}+\alpha^{qul_{2}})x^{t}+\theta^{ul_{2}})\end{equation}
where
$$\mathcal{R}_t=\bigg\{u\in \Bbb N:
\begin{array}{l}
1\le u\le \gcd(n,q^{2}-1),\gcd(u,t)=1,\\
q+1\nmid ul_{2}, u=min\{u,qu\}_{\gcd(n, q^2-1)}\end{array}\bigg\}.$$
$(2)$ For each $t| m_{2,1}$, the number of irreducible binomials  of degree $t$ and degree $2t$ in $\Bbb F_q[x]$ are $\frac{\varphi(t)}t\cdot \gcd(n, q-1)$ and  $\frac{\varphi(t)}{2t}\cdot (\gcd(n, q^{2}-1)-\gcd(n,q-1))$, respectively. For each $t| m_{2}$ with $t\nmid m_{2,1}$,  the number of irreducible binomials  of degree $2t$ in $\Bbb F_q[x]$ is  $\frac{\varphi(t)}{2t}\cdot \gcd(n, q^{2}-1)$.
The total number of irreducible factors of $x^n-1$ in $\Bbb F_q[x]$ is

\begin{eqnarray*}&&\prod_{\mbox{\tiny$
\begin{array}{c}
p| m_{2}\\
p~prime\\
\end{array}$
}}(1+v_p(m_{2})\frac {p-1}p)\cdot \frac{\gcd(n, q^2-1)}2\\&+&\prod_{\mbox{\tiny$
\begin{array}{c}
p| m_{2,1}\\
p~prime\\
\end{array}$
}}(1+v_p(m_{2,1})\frac{p-1}p)\cdot \frac{\gcd(n, q-1)}2.
\end{eqnarray*}
\begin{proof}

By Lemma 2.2, there is a irreducible factorization of $x^n-1$ in $\Bbb F_{q^{2}}[x]$ :

 \begin{equation}
 x^{n} -1=\prod\limits_{t|m_2} \prod_{\mbox{\tiny$
\begin{array}{c}
1\leqslant u\leqslant \gcd(n,q^{2}-1)\\
\gcd(u,t)=1\\
\end{array}$
}}(x^{t}-\alpha^{ul_2}).
\end{equation}

In Eq.(3.5), by  \cite{MVO} for each divisor $t$ of $m_2$ the number of irreducible binomials of degree $t$ in $\Bbb F_{q^2}[x]$ is
$\frac{\varphi(t)}t\cdot\gcd(n, q^2-1)$. Moreover, we know that $x^t-\alpha^{ul_2}$ is also irreducible in $\Bbb F_q[x]$ if and only if $(q+1)|ul_2$ if and only if $\alpha^{ul_2}\in \Bbb F_q$. In the following, we investigate it in the following cases.

{\bf Case 1}. Suppose that $\gcd(n,q-1,q+1)=1$.  Let $n=n_1 n_2$, $rad(n_1)|(q-1)$, $rad(n_2)|(q+1)$ and $\gcd(n_1,n_2)=1$. So $\gcd(n,q^{2}-1)=\gcd(n,q-1)\gcd(n,q+1)$
and
$$\gcd(q+1, l_2)=\gcd(q+1, \frac {q-1}{\gcd(n, q-1)}\cdot\frac{q+1}{\gcd(n, q+1)})=\frac{q+1}{\gcd(n, q+1)}.$$
Hence
we have that $x^t-\alpha^{ul_2}\in \Bbb F_q[x]$ is irreducible  if and only if  $(q+1)| ul_2$  if and only if $\gcd(n, q+1)| u$.
We have  $$m_2=\frac{n}{\gcd(n,q^2-1)}=2^{v_{2}(m_{2})}\cdot\frac{n_{1}}{\gcd(n_{1},q-1)}\cdot \frac{n_{2}}{\gcd(n_{2},q+1)},$$ and set $$m_{2,1}=\frac{n_{1}}{\gcd(n_{1},q-1)}.$$



If $t|m_{w,1}$, then $x^t-\alpha^{ul_2}\in\Bbb F_q[x]$ is also irreducible if and only if $u=\gcd(n_{2}, q+1)u'$ for $1\le u'\le \gcd(n, q-1)$ and $\gcd(t,u')=1$. Hence
there exist $\frac{\varphi(t)}t\cdot
\gcd(n, q-1)$   irreducible binomials  of degree $t$ in $\Bbb F_{q^2}[x]$ that are in $\Bbb F_q[x]$. On the other hand, for each binomial $x^t-\alpha^{ul_2}\in \Bbb F_{q^2}[x]$,  $\gcd(n, q+1)\nmid u$,  there are $2$  conjugate binomials in $\Bbb F_{q^2}[x]$ such that their product generates an irreducible polynomial in $\Bbb F_q[x]$. Thus the number of irreducible polynomials of degree $2t$ in $\Bbb F_q[x]$ is
$\frac {\varphi(t)}{2t}\cdot( \gcd(n,q^2-1)- \gcd(n, q-1)).$

If $t\nmid m_{w,1}$, then
there is no any binomial polynomial of degree $t$ in $\Bbb F_{q^2}[x]$ that is also in $\Bbb F_q[x]$. Hence the number of irreducible polynomials of degree $wt$ in $\Bbb F_q[x]$ is
$\frac {\varphi(t)}{2t}\cdot \gcd(n,q^2-1).$

Hence we have the irreducible factorization of $x^n-1$ over $\Bbb F_q$ in (3.4).
Finally, the total number of irreducible factors of $x^n-1$ in $\Bbb F_q[x]$ is

\begin{eqnarray*}&&\sum_{t|m_{2,1}}\frac{\varphi(t)}{2t}\cdot(\gcd(n, q^2-1)+\gcd(n, q-1))+\sum_{t{\nmid} m_{2,1}}\frac{\varphi(t)}{t}\cdot \frac{\gcd(n, q^2-1)}{2}\\
&=&\prod_{\mbox{\tiny$
\begin{array}{c}
p| m_{2}\\
p~prime\\
\end{array}$
}}(1+v_p(m_{2})\frac {p-1}p)\cdot \frac{\gcd(n, q^2-1)}2\\ &+&\prod_{\mbox{\tiny$
\begin{array}{c}
p| m_{2,1}\\
p~prime\\
\end{array}$
}}(1+v_p(m_{2,1})\frac{p-1}p)\cdot \frac{\gcd(n, q-1)}2 .
\end{eqnarray*}

{\bf Case 2}. Suppose that $\gcd(n,q-1,q+1)=2$. In this case, we have to divided the discussion into two subcases:

Suppose that $q\equiv 1\pmod4$.
Let $n=2^{v_2(n)} n_1 n_2$, where $rad(n_1)|(q-1)$, $rad(n_2)|(q+1)$, and $\gcd(2, n_1n_2)=1$.      Set $m_{2,1}=\frac{n_1}{\gcd(n_1, q-1)}$.

If  $v_2(n)\ge v_2(q-1)+1$,
we have  $\gcd(n, q^2-1)=\gcd( n, q-1)\gcd(n, q+1)$ as $v_{2}(q+1)=1$ and so
\begin{eqnarray*}\gcd(q+1, l_2)&=&\gcd(q+1, \frac{q-1}{\gcd(n, q-1) }\frac{q+1}{\gcd(n,q+1)})=\frac{q+1}{\gcd(n,q+1)}.\end{eqnarray*}

Hence
we have that $x^t-\alpha^{ul_2}\in \Bbb F_q[x]$ is irreducible  if and only if  $(q+1) |ul_2$  if and only if $\gcd(n, q+1)\mid u$.
Similarly, we have the irreducible factorization of $x^n-1$ over $\Bbb F_q$ in (3.4) and
the total number of irreducible factors of $x^n-1$ in $\Bbb F_q[x]$.

If  $v_2(n)\leqslant v_2(q-1)$,  we have $\gcd(n,q^{2}-1)=\gcd(n/2,q-1)\gcd(n,q+1)$
 and
\begin{eqnarray*}\gcd(q+1, l_2)&=&\gcd(q+1, \frac{q-1}{\gcd(n/2, q-1) }\frac{q+1}{\gcd(n,q+1)})=\frac{2( q+1)}{\gcd(n,q+1)}.\end{eqnarray*}

Hence we have that $x^t-\alpha^{ul_2}\in \Bbb F_q[x]$ is irreducible  if and only if  $(q+1)| ul_2$  if and only if $\frac{\gcd(n, q+1)}{2}\mid u$. Note that $\gcd(n, q^2-1)=\gcd(n, q-1)\cdot\frac 12\cdot \gcd(n,q+1)$.
Similarly, we have the irreducible factorization of $x^n-1$ over $\Bbb F_q$ in (3.4) and
the total number of irreducible factors of $x^n-1$ in $\Bbb F_q[x]$.

 Suppose that $q\equiv 3\pmod4$. Let $n=2^{v_2(n)} n_1 n_2$, where $rad(n_1)|(q-1)$, $rad(n_2)|q+1$, and $\gcd(2, n_1n_2)=1$.      Set $m_{w,1}=\frac{n_1}{\gcd(n_1, q-1)}$.

If  $v_2(n)\ge v_2(q+1)+1$,
we have  $\gcd(n, q^2-1)=\gcd( n, q-1)\gcd(n, q+1)$ as $v_{2}(q-1)=1$ and so
\begin{eqnarray*}\gcd(q+1, l_2)&=&\gcd(q+1, \frac{q-1}{\gcd(n, q-1) }\frac{q+1}{\gcd(n,q+1)})=\frac{q+1}{\gcd(n,q+1)}.\end{eqnarray*}

Hence
we have that $x^t-\alpha^{ul_2}\in \Bbb F_q[x]$ is irreducible  if and only if  $(q+1)| ul_2$  if and only if $\gcd(n, q+1)\mid u$.
Similarly, we have the irreducible factorization of $x^n-1$ over $\Bbb F_q$ in (3.4) and
the total number of irreducible factors of $x^n-1$ in $\Bbb F_q[x]$.

If  $v_2(n)\leqslant v_2(q+1)$,  we have $\gcd(n,q^{2}-1)=\gcd(n,q-1)\gcd(n/2,q+1)$
 and
\begin{eqnarray*}\gcd(q+1, l_2)&=&\gcd(q+1, \frac{q-1}{\gcd(n, q-1) }\frac{q+1}{\gcd(n/2,q+1)})=\frac{ q+1}{\gcd(n/2,q+1)}.\end{eqnarray*}

Hence we have that $x^t-\alpha^{ul_2}\in \Bbb F_q[x]$ is irreducible  if and only if  $(q+1)\mid ul_2$  if and only if $\gcd(n/2,q+1)\mid u$.
Similarly, we have the irreducible factorization of $x^n-1$ over $\Bbb F_q$ in (3.4) and
the total number of irreducible factors of $x^n-1$ in $\Bbb F_q[x]$.
\end{proof}

\end{thm}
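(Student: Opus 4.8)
The plan is to descend the $\Bbb F_{q^2}$-factorization supplied by Lemma 2.2 down to $\Bbb F_q$. Since $rad(n)\mid(q^2-1)$, Lemma 2.2 applies over $\Bbb F_{q^2}$ and gives
\begin{equation*}
x^n-1=\prod_{t\mid m_2}\prod_{\substack{1\le u\le\gcd(n,q^2-1)\\ \gcd(u,t)=1}}(x^t-\alpha^{ul_2}),
\end{equation*}
a product of binomials irreducible over $\Bbb F_{q^2}$. The decisive observation is that each such binomial has coefficients in $\Bbb F_{q^2}$, so either it already lies in $\Bbb F_q[x]$ --- in which case, being irreducible over the larger field $\Bbb F_{q^2}$, it remains irreducible over $\Bbb F_q$ --- or it does not, and then I would pair it with its Frobenius image $x^t-\alpha^{qul_2}$ and invoke Lemma 2.3 to conclude that the product $(x^t-\alpha^{ul_2})(x^t-\alpha^{qul_2})=x^{2t}-(\alpha^{ul_2}+\alpha^{qul_2})x^t+\theta^{ul_2}$ is irreducible over $\Bbb F_q$. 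Thus the entire problem reduces to deciding, for each admissible pair $(t,u)$, whether $\alpha^{ul_2}\in\Bbb F_q$.

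The criterion $\alpha^{ul_2}\in\Bbb F_q$ is equivalent to $(\alpha^{ul_2})^{q}=\alpha^{ul_2}$, i.e.\ to $(q+1)\mid ul_2$, which I would convert into a divisibility condition on $u$ by computing $\gcd(q+1,l_2)$: then $(q+1)\mid ul_2$ becomes $\frac{q+1}{\gcd(q+1,l_2)}\mid u$. Here the arithmetic splits according to $\gcd(n,q-1,q+1)$, which equals $1$ or $2$ since $\gcd(q-1,q+1)\mid 2$. In the coprime case one has $\gcd(n,q^2-1)=\gcd(n,q-1)\gcd(n,q+1)$, so $l_2=l_1\cdot\frac{q+1}{\gcd(n,q+1)}$ and a short computation yields $\gcd(q+1,l_2)=\frac{q+1}{\gcd(n,q+1)}$, whence the descent condition is exactly $\gcd(n,q+1)\mid u$. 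When $\gcd(n,q-1,q+1)=2$ I would further separate $q\equiv1$ from $q\equiv3\pmod 4$ and track $v_2(n)$ against $v_2(q-1)$ or $v_2(q+1)$; in the borderline subcases an extra factor of $2$ enters $\gcd(n,q^2-1)$ and the threshold becomes $\frac{\gcd(n,q+1)}{2}\mid u$, but the resulting factorization retains the same shape as in $(3.4)$.

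For the enumeration I would stratify by $t$. By the count accompanying Lemma 2.2 there are $\frac{\varphi(t)}{t}\gcd(n,q^2-1)$ binomials of degree $t$ over $\Bbb F_{q^2}$. Writing $u=\gcd(n,q+1)u'$, the descending ones are indexed by $1\le u'\le\gcd(n,q-1)$ with $\gcd(u',t)=1$, and such binomials occur precisely when $t\mid m_{2,1}$ --- the part of $m_2$ arising from $n_1$, i.e.\ from the divisors $d\mid n$ with $rad(d)\mid(q-1)$; for these $t$ there are $\frac{\varphi(t)}{t}\gcd(n,q-1)$ of them, giving the degree-$t$ factors over $\Bbb F_q$. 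The remaining $\frac{\varphi(t)}{t}(\gcd(n,q^2-1)-\gcd(n,q-1))$ binomials (for $t\mid m_{2,1}$) or all $\frac{\varphi(t)}{t}\gcd(n,q^2-1)$ of them (for $t\nmid m_{2,1}$) fall into Frobenius-conjugate pairs, the set $\mathcal{R}_t$ selecting one representative per pair via $u=\min\{u,qu\}_{\gcd(n,q^2-1)}$; dividing by $2$ produces the degree-$2t$ trinomial counts. Summing over $t$ and using that $\sum_{t\mid N}\frac{\varphi(t)}{t}$ is multiplicative with $\sum_{d\mid p^k}\frac{\varphi(d)}{d}=1+k\frac{p-1}{p}$ collapses the two contributions into the stated product formula.

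I expect the principal difficulty to be bookkeeping rather than conceptual. The conceptual core --- descent via Lemma 2.3 together with the criterion $(q+1)\mid ul_2$ --- is uniform, but the case $\gcd(n,q-1,q+1)=2$ forces a delicate $2$-adic analysis: one must check in each subcase that the extra factor of $2$ is correctly absorbed either into $\gcd(n,q^2-1)$ or into the descent threshold on $u$, without disturbing the degree-$t$ and degree-$2t$ counts. A secondary point requiring care is confirming that the conjugate pairs are genuine, i.e.\ that $\alpha^{ul_2}\ne\alpha^{qul_2}$ for $u\in\mathcal{R}_t$ (guaranteed by $q+1\nmid ul_2$), so that no trinomial degenerates into the square of a binomial and the division by $2$ in the counting is exact.
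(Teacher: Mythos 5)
Your proposal follows essentially the same route as the paper's proof: factor $x^n-1$ over $\Bbb F_{q^2}$ via Lemma 2.2, test descent to $\Bbb F_q$ by the criterion $(q+1)\mid ul_2$ made explicit through computing $\gcd(q+1,l_2)$, split into $\gcd(n,q-1,q+1)=1$ or $2$ with the same subcases $q\equiv 1,3\pmod 4$ and the same comparison of $v_2(n)$ against $v_2(q\mp 1)$, and count using the multiplicativity of $\sum_{t\mid N}\varphi(t)/t$. The only blemish is the citation of Lemma 2.3 (the $q\equiv 3\pmod 4$ corollary of \cite{MVO}) where the Frobenius-product statement you need is the paper's Lemma 2.4; your explicit check that the conjugate pairs are genuine (so no trinomial degenerates) is a point the paper leaves implicit and is correctly justified by $q+1\nmid ul_2$.
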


\begin{exa} Suppose that $w=2$ and $ord_{rad(n)}(q)=2$.
In Table $2$, we list irreducible factor numbers of $x^{n}-1$ over $\Bbb F_{q}$ for $q<10$ by Theorem $3.6$.
\end{exa}
\[ \begin{tabular} {c} Table $2$. Total irreducible factor number of $x^{n}-1$\\
\begin{tabular}{|c|c|c|c|c|c|c|c|}
  \hline
 $q$ & $n$& $m_{w}$&$m_{w,1}$& Irreducible factors number \\
\hline
     $ 2$   &    $3^{k}(k\geqslant1)$&$3^{k-1} $&$1$&$k+1$\\
   \hline
   $4$  &     $3^{k_{1}}5^{k_{2}}(0\leqslant k_{1}\leqslant1, k_{2}\geqslant1)$&$5^{k_{2}-1} $&$1$&$3^{k_{1}}( 2k_{2}+1)$\\
   \hline
    $4$  &     $3^{k_{1}}5^{k_{2}}(k_{1}\geqslant2,k_{2}\geqslant1)$&$3^{k_{1}-1}5^{k_{2}-1} $&$3^{k_{1}-1}$&$ (2k_{1}+1)(2k_{2}+1)$\\
   \hline
    $5$  &    $2^{k_{1}}3^{k_{2}}(0\leqslant k_{1}\leqslant2,k_{2}\geqslant1)$&$3^{k_{2}-1} $ &$1$&$2^{k_{1}}(k_{2}+1)$\\
   \hline
    $ 5$  &    $2^{k_{1}}3^{k_{2}}(k_{1}\geqslant3,k_{2}\geqslant1 )$&$2^{k_{1}-3}3^{k_{2}-1} $ &$1$&$4k_{1}k_{2}+2k_{1}-4k_{2}$\\
   \hline

  $ 8$ &    $3^{k_{1}}7^{k_{2}}(1\leqslant k_{1}\leqslant2,0\leqslant k_{2}\leqslant1 )$&$1$&$1 $&$\frac{ (3^{k_{1}}+1)7^{k_{2}}}{2}$\\
   \hline
    $ 8$ &    $3^{k_{1}}7^{k_{2}}(1\leqslant k_{1}\leqslant2,k_{2}\geqslant2 )$&$7^{k_{2}-1} $&$7^{k_{2}-1}  $&$\frac{7+3^{k_{1}}(6k_{2}+1)}{2} $\\
   \hline
     $ 8$ &    $3^{k_{1}}7^{k_{2}}( k_{1}\geqslant3,0\leqslant k_{2}\leqslant1)$&$3^{k_{1}-2} $ &$1 $&$7^{k_{2}}( 3k_{1}-1)$\\
   \hline
    $ 8$ &    $3^{k_{1}}7^{k_{2}}( k_{1}\geqslant3,k_{2}\geqslant2)$&$3^{k_{1}-2}7^{k_{2}-1} $ &$7^{k_{2}-1} $&$(3k_{1}-1)(6k_{2}+1)$\\
   \hline
     $9$ &    $2^{k_{1}}5^{k_{2}}(0\leqslant k_{1}\leqslant3,k_{2}\geqslant1 )$&$5^{k_{2}-1} $ &$1$&$2^{k_{1}}(2k_{2}+1)$\\
   \hline
   $9$ &    $2^{k_{1}}5^{k_{2}}(k_{1}\geqslant4,k_{2}\geqslant1 )$&$2^{k_{1}-4}5^{k_{2}-1} $ &$1$&$4(4k_{1}k_{2}+k_{1}-8k_{2}-1)$\\
   \hline
\end{tabular}
\end{tabular}
\]

\section{Concluding remarks}
 In this paper, suppose that $rad(n)\nmid (q-1)$ and $rad(n)|(q^w-1)$, where $w$ is prime,  we explicitly give the factorization of $x^{n}-1$ over $\Bbb F_{q}$. If $w$ is a composite number, can we give the explicit factorization of $x^n-1$ over $\Bbb F_q$? We leave this for future work.



\begin{thebibliography}{[kkk]}


  \bibitem{BGM}  I.F.  Blake, S. Gao, R.C. Mullin,  Explicit factorization of $x^{2^k }+1$  over  $\Bbb F_{p}$ with prime $p \equiv3 \pmod 4$, Appl. Algebra Engrg. Comm. Comput. 4 (2) (1993) 89-94.
 \bibitem{CL} B. Chen, H. Liu, and G.Zhang, A class of minimal cyclic codes over finite fields, Des. Codes Cryptogr. 74 (2) (2015)   285-300.
  \bibitem{CLT} B. Chen, L. Li, R. Tuerhong,  Explicit factorization of $X^{2^{m}p^{n}}-1$ over a finite field, Finite Fields Appl. 24 (2013)  95-104.




















  \bibitem{F}  R.W. Fitzgerald, J.L. Yucas, Generalized reciprocals, factors of Dickson polynomials and generalized cyclotomic polynomials over finite fields, Finite Fields Appl. 13 (2007) 492-515.

\bibitem{LY1} F. Li , Q. Yue, C. Li,  Irreducible cyclic codes of length $4p^{n}$ and $8p^{n}$,  Finite Fields Appl. 34 ( 2015) 208-234.


\bibitem{LY2} F. Li , Q. Yue,  The primitive idempotents and weight distributions of irreducible constacyclic codes, Des. Codes Cryptogr.  (2017) https://doi.org/10.1007/s10623-017-0356-2.


 \bibitem{LN} R. Lidl and H. Niederreiter, Finite Fields, Cambridge University Press, Cambridge, 2008.



     \bibitem{MVO} F. Mart\'{\i}nez, C. Vergara, L. Oliveira, Explicit factorization of $x^{n} -1 \in \Bbb F_{q} [x]$, Des. Codes Cryptogr.  77 (1) (2015)  277-286.




   \bibitem{TW} A. Tuxanidy, Q. Wang, Composed products and factors of cyclotomic polynomials over finite fields, Des. Codes Cryptogr. 69 (2) (2013) 203-231.










  \bibitem{WW} L. Wang, Q. Wang, On explicit factors of cyclotomic polynomials over finite fields, Des. Codes Cryptogr. 63 (1) (2012) 87-104.

 \bibitem{W} Q. Wang,  Cyclotomy and permutation polynomials of large indices, Finite Fields Appl. 22 (7) (2013) 57-69.

 \bibitem{WY} Q. Wang, J. L. Yucas,  Dickson polynomials over finite fields,  Finite Fields Appl. 18 (4) (2012) 814-831.

  \bibitem{WZF} H. Wu, L. Zhu, R. Feng , et al, Explicit factorizations of cyclotomic polynomials over finite fields, Des. Codes Cryptogr. 83 (1) (2016) 197-217.










 \end{thebibliography}
\end{document}